\newtheorem*{Proposition}{Proposition}
\newlength{\figurewidth}
\newlength{\figureheight}
\DeclareMathAlphabet\mathbfcal{OMS}{cmsy}{b}{n}
\DeclareFontFamily{OT1}{pzc}{}
\DeclareFontShape{OT1}{pzc}{m}{it}{<-> s * [1.3] pzcmi7t}{}
\DeclareMathAlphabet{\mathpzc}{OT1}{pzc}{m}{it}
\newlength\FHoffset
\definecolor{amundi_blue}{RGB}{0,176,240}
\definecolor{amundi_dark_blue}{RGB}{0,28,75}
\definecolor{darkblue}{rgb}{0.0, 0.0, 0.55}
\begin{document}

\setcounter{page}{1}

\title{\textbf{\color{amundi_blue}Forecasting Probability of Default for Consumer Loan Management with Gaussian Mixture Models}%
}

\author{
\hspace{-0.3cm}
{\color{amundi_dark_blue} Hamidreza Arian\footnote{E-Mail: {\color{amundi_blue} \href{emailto:hamidreza.arian@utoronto.ca}{hamidreza.arian@utoronto.ca}}}} \\
\hspace{-0.3cm} Sharif University \\
\hspace{-0.3cm} of Technology
\and
{\color{amundi_dark_blue} Seyed Mohammad Sina Seyfi\footnote{E-Mail: {\color{amundi_blue} \href{emailto:sina.seyfi@aalto.fi}{sina.seyfi@aalto.fi}}}} \\
Sharif University \\
of Technology
\and
{\color{amundi_dark_blue} Azin Sharifi\footnote{E-Mail: {\color{amundi_blue} \href{emailto:azin.sharifi@utoronto.ca}{azin.sharifi@utoronto.ca}}}} \\
Sharif University \\
of Technology
}

\date{\color{amundi_dark_blue} October 2020}

\maketitle

\begin{abstract}
\noindent
Credit scoring is an essential tool used by global financial institutions and credit lenders for financial decision making. In this paper, we introduce a new method based on Gaussian Mixture Model (GMM) to forecast the probability of default for individual loan applicants. Clustering similar customers with each other, our model associates a probability of being healthy to each group. In addition, our GMM-based model probabilistically associates individual samples to clusters, and then estimates the probability of default for each individual based on how it relates to GMM clusters. We provide applications for risk managers and decision makers in banks and non-bank financial institutions to maximize profit and mitigate the expected loss by giving loans to those who have a probability of default below a decision threshold. Our model has a number of advantages. First, it gives a probabilistic view of credit standing for each individual applicant instead of a binary classification and therefore provides more information for financial decision makers. Second, the expected loss on the train set calculated by our GMM-based default probabilities is very close to the actual loss, and third, our approach is computationally efficient. 
\end{abstract}

\noindent \textbf{Keywords:} 
Credit Scoring, 
Machine Learning,  
Probability of Default, 
Gaussian Mixture Models, 
Financial Decision Making

\noindent \textbf{JEL classification:} C61, G11.

\section{Introduction}

In the face of the recent challenges, it has been a crucial issue for bankers and credit lenders to find models to predict borrower's bankruptcy with high precision (\cite{altman2014distressed}, \cite{veganzones2018investigation}). As a result, credit scoring systems have evolved as an essential part of bankruptcy prediction and credit risk management solutions. Financial institutions, especially banks, are widely using credit scoring models to allocate credit to good applicants, and to differentiate between good and bad borrowers. Applying credit scoring models will reduce credit transaction costs and the potential risk of a bad loan loss, and results in improvements in credit lending decisions. The main objective of credit risk analysis is to use personal and financial information to forecast default probability of individual customers, and to reduce the credit risk associated with issuing loan.  

Within the bankruptcy prediction studies that have been published, there is a wide range of classification methods used, including Linear Discriminant Analysis (LDA), Logistic Regression (LR) (\cite{dimitras1996survey}, \cite{ohlson1980financial}), and more recently, nonlinear techniques such as Support Vector Machines (SVMs) and Neural Networks (NNs) (\cite{lee1996hybrid}, \cite{huang2007credit}, and \cite{mancisidor2020deep}). Recently, advanced machine learning and data-driven approaches have gained popularity in offering state-of-the-art solutions for predicting bankruptcy. In addition, in machine learning applications in the corporate sector, the relationship between company features and credit decision-making is learned from the data rather than modeling the company's financial characteristics (\cite{atiya2001bankruptcy}). Most of the work in credit scoring has been focused on building models to provide a binary prediction on whether a customer will default on their loan or not. Such binary classification routines can be based on various types of supervised leaning algorithms (\cite{hand2001supervised}, \cite{zhang2010bayesian}, \cite{min2005bankruptcy}, \cite{salcedo2005genetic},  and \cite{wu2007real}).

\cite{hand1997statistical} review several statistical classification models for credit scoring. They emphasize that the credit scoring problem does not have a unique best solution. As a result of such surveys, numerous researchers have focused on providing hybrid and combined models for classification challenges present in building credit scoring solutions and some propose to combine domain knowledge with classification methods (\cite{sinha2008incorporating}). Some researchers implement a scorecard model using fewer classifiers while some other employ multiple classifiers and combine them with ensemble techniques (\cite{ala2016classifiers} and \cite{xiao2020cost}). To find an efficient ensemble classifier, \cite{hsieh2010data}, presents the class-wise classification. In order to investigate the performance of a combination of multiple classifiers with respect to a single classifier, \cite{tsai2008using} compare single and multiple classifiers for the bankruptcy prediction. They conclude multiple classifiers do not outperform the single ones in all cases. In contrary, \cite{wang2011comparative} show that multiple classifiers could outperform single classification algorithms. Their results suggest that three ensemble methods, namely Bagging, Boosting, and Stacking, can improve individual learners, like LR, Decision Trees, Artificial Neural Networks, and SVMs. In addition, \cite{zhang2010vertical} introduce vertical bagging decision trees model, which provides a combination of classifiers. SVMs are also capable of being a hybrid component with data mining techniques, where clustering and classification stages are performed through SVM (\cite{chen2012credit}). As another interesting research, Clustered Support Vector Machine (CSVM) for credit scoring has been reported as an efficient and flexible classifier method compared to other nonlinear SVM approaches (\cite{harris2015credit}). 

Despite the broad range of binary approaches used, there are techniques capable of estimating the probability of whether or not an individual would default on their loan (\cite{yeh2009comparisons}, \cite{capotorti2012credit}, \cite{gonen2012probabilistic}, \cite{kruppa2013consumer}, \cite{antunes2017probabilistic}, and \cite{feng2018dynamic}). Default probabilities provide more comprehensive consumer creditworthiness data than a binary or multi-category scoring. As most credit scoring articles are concerned with classification aspects of the subject, further study is needed to update scorecards with information on probability of default and forecasting profitability of the credit lending business itself (\cite{fang2019new}). Either case, for binary classification or calculating associated default probabilities, companies need to obtain individual applicants' information from commercial credit companies that gather customer information and use a mathematical model to infer a measure of propensity to default. By categorizing credit card users based on their usage into two groups of transactors and revolvers, \cite{so2014using} provide two separate scores for profit maximization. As an alternative approach to default probability estimation, \cite{serrano2016use} estimate the profitability by internal rate of return resulting in superior business profits. The purpose of this study is to introduce one of the most promising among recently developed statistical techniques – the Gaussian Mixture models – for default prediction and profit maximization.

In this paper, a new classification approach for predicting default probabilities is introduced. Our method is based on Gaussian Mixture model (GMM) clustering. GMMs are one of the most widely used unsupervised mixture models which are popular for their use in clustering applications (\cite{bishop2006pattern}). Using GMMs, we estimate a probability of default for each individual customer. All customers will be placed in different clusters with respect to their feature similarities. Upon clustering data by GMM, the ratio of good and bad customers on each cluster can suggest a probability of bankruptcy for members of that cluster. By using an appropriate decision threshold, according to risk tolerance of the loan issuer, all loan applicants can be classified in two groups.  
After the model is trained, given a new test sample, our model first calculates a probability that the new sample belongs to each cluster, and then uses the default probabilities of clusters to estimate how likely this new sample is to default. 
Training data is used for finding the optimal number of clusters, and the parameters of each cluster is estimated using Expectation-Maximization (EM) algorithm. According to these parameters, any sample belongs to a cluster with a given probability. As an interesting result from a risk managerial perspective, we show that our approach can predict Expected Loss (EL) of bad loans very well, even when the data is significantly imbalanced. Finally, we propose an approach for calculating a minimum probability decision threshold for a credit lender to use along with GMM-based default probabilities for individual customers for profit maximization.
  
In what follows, section \ref{sec:Methodology} provides the mathematical background of our approach for predicting default probabilities, where we discuss the Gaussian Mixture clustering, and how probabilities can be associated to these clusters. In section \ref{sec:app}, we show how GMM probabilities can be used to forecast EL for consumer loan portfolios and discuss the application of our model for profit maximization of credit institutions. Section \ref{sec:emp} provides information on the data we have used for our analysis and numerical results with regards to default probabilities and EL. Finally, section \ref{sec:conc} concludes the paper.

\section{Methodology} \label{sec:Methodology}

In this section, we introduce our probabilistic approach for estimating default probability of consumer loan holders. Our approach is based on Gaussian Mixture Model which is a popular unsupervised clustering technique but its applications in credit scoring has not been fully explored yet.

\subsection{Gaussian Mixture Models}

Gaussian Mixture models (GMM) are probability distribution functions defined as weighted summation of a finite set of normal distributions
\begin{equation}
p(\mathbf{x} | \boldsymbol{\theta})=\sum_{i=1}^{N_c} \omega_{i} \phi\left(\mathbf{x} | \boldsymbol{\mu_{i}}, \boldsymbol{\Sigma}_{i}\right) ,
\end{equation}
with the restriction of $\sum_{i=1}^{N_c} w_{i} = 1$. In our credit scoring application, $\mathbf{x}$ is the customer's features data, $N_c$ is assumed to be the number of clusters, $\boldsymbol{\omega} = \{\omega_i\}_{i=1}^{N_c}$ is the set of cluster weights, and $\boldsymbol{\mu} = \{\boldsymbol{\mu_i}\}_{i=1}^{N_c}$ and $\boldsymbol{\Sigma} = \{\boldsymbol{\Sigma_i}\}_{i=1}^{N_c}$ are the means and covariance matrices of clusters, respectively. $\phi$ also indicates a normal distribution density
\begin{equation}
{\phi\left(\mathbf{x} | \boldsymbol{\mu_{i}}, \boldsymbol{\Sigma_{i}}\right)= \frac{1}{(2 \pi)^{D / 2}\left|\boldsymbol{\Sigma_{i}}\right|^{1 / 2}}} { \exp \left\{-\frac{1}{2}\left(\mathbf{x}-\boldsymbol{\mu_{i}}\right)^{\prime} \boldsymbol{\Sigma_{i}}^{-1}\left(\mathbf{x}-\boldsymbol{\mu_{i}}\right)\right\}}
.
\end{equation}
Within the GMM framework, each cluster is uniquely defined with its parameters: ${\omega_i, \boldsymbol{\mu_i}, \boldsymbol{\Sigma_i}}$. 

\subsection{Probabilistic Clustering with Gaussian Mixture Model}

GMM is capable of dividing data points into clusters with respect to their statistical features. The clusters are represented by their means, covariances and their weights. Unlike k-means clustering, GMM provides a probabilistic association between samples and clusters, rather than a binary classification (\cite{bishop2006pattern}). For estimating this probabilistic association, a latent variable $\mathbf{z} = (z_1, z_2, \allowbreak \cdots, z_{N_c})$ is considered such that $z_i\in \{0, 1\}$ and $\sum_i{z_i}=1$, whereby $p(\mathbf{x}|z_k=1)=\phi(\mathbf{x}|\boldsymbol{\mu_k},\boldsymbol{\Sigma_k})$. The joint distribution can be calculated by $p(\mathbf{x},\mathbf{z}) = p(\mathbf{z})p(\mathbf{x}|\mathbf{z})$. GMM clustering is an unsupervised approach in which the clusters are not labeled. For credit scoring applications, our algorithm labels each GMM cluster according to its members and their associated probabilities. For calculating the probability of each point belonging to each cluster, we define the GMM-based probabilities as
\begin{equation} \label{eq:probability}
    p(\mathbf{x} \in C_j | \mathbf{\theta}) = \frac{\omega_j\phi(\mathbf{x}|\mathbf{\mu_j}, \mathbf{\Sigma_j})}{\sum_{i=1}^{N_c} \omega_{i} \phi\left(\mathbf{x} | \boldsymbol{\mu_{i}}, \boldsymbol{\Sigma}_{i}\right)},
\end{equation}
where $C_j$ represents the $j$th Gaussian cluster. 


\subsection{Parameter estimation}

Using training data set, we calibrate GMM parameter set $\boldsymbol{\theta} = \{\boldsymbol{\omega}, \boldsymbol{\mu}, \boldsymbol{\Sigma}\}$ to maximize the log-likelihood function $\sum_{i=1}^n \log p(\mathbf{x_i}|
\mathbf{\theta})$. Since in the case of GMM parameter estimation, maximum likelihood does not have a closed form solution, we use the Expectation-Maximization (EM) approach for estimating $\boldsymbol{\theta}$. For a faster convergence of EM algorithm, we use k-means initialization to find the first set of parameters. Then parameters will be updated in each step of the EM algorithm according to
\begin{equation}
\begin{aligned}
\boldsymbol{{\mu}_{k}^{\text { new }}} &=\frac{1}{n_k} \sum_{n=1}^{N} r\left(z_{n k}\right) \mathbf{{x}_{n}}, \quad {\omega}_{k}^{\text { new }} =\frac{\sum_{n=1}^{N} r\left(z_{n k}\right)}{N},\\ 
\mathbf{{\Sigma}_{k}^{\text { new }}} &=\frac{1}{n_k} \sum_{n=1}^{N} r\left(z_{n k}\right)\left(\mathbf{{x}_{n}}-\boldsymbol{{\mu}_{k}^{\text { new }}}\right)\left(\mathbf{{x}_{n}}-\boldsymbol{{\mu}_{k}^{\text { new }}}\right)^{\mathrm{T}},
\end{aligned}
\end{equation}
where $r(z_{n k})$ is the responsibility function
\begin{equation}\label{probfunc}
r\left(z_{n k}\right)=\frac{\omega_{k} \phi\left(\mathbf{{x}_{n}} | \boldsymbol{\mu_{k}}, \boldsymbol{\Sigma_{k}}\right)}{\sum_{j=1}^{N_c} \omega_{j} \phi\left(\mathbf{{x}_{n}} | \mathbf{\boldsymbol{\mu_j}}, \boldsymbol{\Sigma_j}\right)} . 
\end{equation}
When the log-likelihood function reaches the maximum number of iterations or when more iterations do not improve the corresponding log-likelihood function significantly, the algorithm terminates. 

\subsection{Probability of default for each cluster}

In our model, each individual credit applicant can be associated to a given cluster with the probability given in equation \ref{eq:probability}. Considering $n$ customers, with the target of $\mathbf{y} = \{y_1, \cdots, y_n\}$, with $y_i$ being either 0 (bad client) or 1 (good client), we define the pay-back probability of each cluster by
\begin{equation} \label{eq:probclus1}
    p(y_{C_j}=1 | \mathbf{\theta}) = \frac{\sum_{i=1}^{n} p(\mathbf{x}_i \in C_j | \mathbf{\theta})y_i}{\sum_{i=1}^{n} p(\mathbf{x}_i \in C_j | \mathbf{\theta})} ,
\end{equation}
where $ p(y_{C_j}=1 | \mathbf{\theta})$ is the probability of being healthy in the Gaussian cluster $C_j$ given $\mathbf{\theta}$. The probability of default for each cluster can be calculated from pay-back probability by $ p(y_{C_j} = 0 | \mathbf{\theta}) = 1- p(y_{C_j}=1 | \mathbf{\theta})$ as
\begin{equation} \label{eq:probclus0}
    p(y_{C_j}=0 | \mathbf{\theta}) = \frac{\sum_{i=1}^{n} p(\mathbf{x}_i \in C_j | \mathbf{\theta}) (1-y_i)}{\sum_{i=1}^{n} p(\mathbf{x}_i \in C_j | \mathbf{\theta})} .
\end{equation}

\subsection{Probability of default for individual applicants based on the clusters probabilities}

For calculating the probability of default for loan applicants in the test set, we use cluster-wise pay-back probabilities from equation \ref{eq:probclus1} across all clusters to estimate the pay-back probability, $ p(y_{\text{new}}=1|\mathbf{\theta})$, for a new applicant from the test set
\begin{equation} \label{eq:probtest1}
    p(y_{\text{new}}=1|\mathbf{\theta}) = \frac{\sum_{j=1}^{N_c}p(y_{C_j}=1 | \mathbf{\theta})p(\mathbf{x}_{\text{new}} \in C_j | \mathbf{\theta})}{\sum_{j=1}^{N_c}p(\mathbf{x}_{\text{new}} \in C_j | \mathbf{\theta})},
\end{equation}
where $\mathbf{x_{\text{new}}}$ is the feature vector of the customer in the test set and $y_\text{new}$ represents its corresponding target. Similarly, the default probability of $\mathbf{x_{\text{new}}}$ is calculated by $p(y_{\text{new}}=0|\mathbf{\theta}) = 1 - p(y_{\text{new}}=1|\mathbf{\theta})$ as
\begin{equation} \label{eq:probtest0}
    p(y_{\text{new}}=0|\mathbf{\theta}) = \frac{\sum_{j=1}^{N_c}p(y_{C_j}=0 | \mathbf{\theta})p(\mathbf{x}_{\text{new}} \in C_j | \mathbf{\theta})}{\sum_{j=1}^{N_c}p(\mathbf{x}_{\text{new}} \in C_j | \mathbf{\theta})},
\end{equation}
Equation \ref{eq:probtest0} uses the probabilistic information of all constructed clusters to compute the default probability of a new customer. 
Note that in both equations \eqref{eq:probtest1} and \eqref{eq:probtest0}, the denominator is equal to one and both equations can be simplified. However, we have kept the summation in the denominator for clarification purposes. The relationship between cluster-wise probabilities and new data features is displayed in Figure \ref{fig:flowchart2}.



\begin{figure}[ht!]
    \centering
    \includegraphics[width=0.9\linewidth]{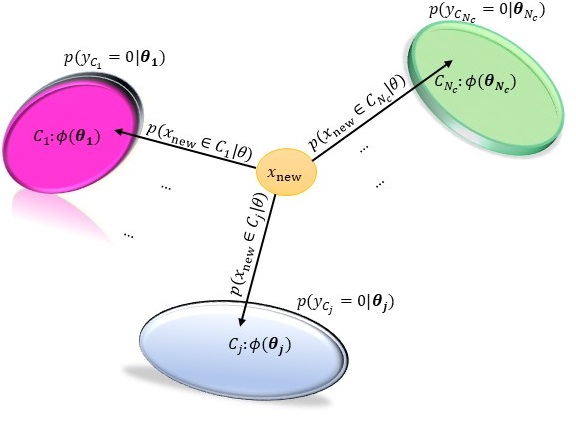}
    \caption{This graph shows the cluster-wise default probabilities, $p(y_{C_j}=0 | \mathbf{\theta})$ , as well as the probabilities that a new sample, like $\mathbf{x}_{\text{new}}$, belongs to each cluster, $p(\mathbf{x}_{\text{new}} \in C_j | \mathbf{\theta})$. The default probability calculated for $\mathbf{x}_{\text{new}}$ is the weighted average of the cluster-wise default probabilities weighted by the probability that $\mathbf{x}_{\text{new}}$ belongs to $C_j$, namely $\sum_{j=1}^{N_c}p(y_{C_j}=0 | \mathbf{\theta})p(\mathbf{x}_{\text{new}} \in C_j | \mathbf{\theta})$ (see equation \eqref{eq:probtest0} and note that $\sum_{j=1}^{N_c}p(\mathbf{x}_{\text{new}} \in C_j | \mathbf{\theta}) = 1$). This is worthwhile to mention that clusters are probabilistically overlapping as each sample belongs to all clusters with a given probability. The direction in which each cluster is spanned may also be different from others as each has its own covariance matrix. 
}
    \label{fig:flowchart2}
\end{figure}


\section{Applications to Risk Management} \label{sec:app}

In this section, we provide applications of our probability modeling technique for risk management of retail loans, including calculation of expected loss and a decision threshold for the probability of default. 

\subsection{Expected loss for loan issuer} \label{sebsec:app}

Expected loss (EL) is the summation of all possible losses that a loan issuer may bear due to default of borrowers. EL depends not only on probability of default, but also on the recovery rate (RR) and the Exposure at Default (EAD). Mathematically, the expected loss calculated by a predictive model is 
\begin{equation}\label{eq:expected_loss}
    \text{EL}_\text{model} = \sum_i \text{PD}_i \times (1 - \text{R}_i) \times \text{EAD}_i,
\end{equation}
and similarly, the actual expected loss from the data is 
\begin{equation}\label{eq:expected_loss}
    \text{EL}_\text{actual} = \sum_i {y_i \times (1 - \text{R}) \times \text{EAD}},
\end{equation}
where $R_i$ and $\text{EAD}_i$ are the RR and EAD for customer $i$. To assess the performance of our model for predicting loss due to bad loans, we define the relative EL error by
\begin{equation}
    \epsilon_\text{EL} = \frac{\text{EL}_\text{model}-\text{EL}_\text{actual}}{\text{EL}_\text{actual}},
\end{equation}
provided that the denominator is not zero. Here we will show that our terminology above for defining default probabilities based on GMM clusters is self-consistent and provides zero error for estimation of EL on the train set. 

\begin{Proposition}
On a consumer loan portfolio, suppose the probability of default is given by equation \eqref{eq:probtest0} and each customer has a constant loan amount and recovery rate. Then the GMM-based model provides perfect precision in predicting the expected loss on the train set, i.e., $\epsilon_{EL}=0$.
\end{Proposition}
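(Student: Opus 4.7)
The plan is to reduce the statement to a self-consistency identity between predicted default probabilities and empirical default indicators on the training set. Since the recovery rate $R_i = R$ and the exposure $\text{EAD}_i = \text{EAD}$ are assumed constant across customers, both constants factor out of $\text{EL}_{\text{model}}$ and $\text{EL}_{\text{actual}}$, so that showing $\epsilon_{\text{EL}}=0$ reduces to proving the scalar identity
\begin{equation*}
\sum_{i=1}^{n} p(y_i = 0 \mid \boldsymbol{\theta}) \;=\; \sum_{i=1}^{n} (1-y_i),
\end{equation*}
i.e., that the sum of predicted default probabilities over the train set equals the observed number of defaulting customers.

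First I would substitute equation \eqref{eq:probtest0} for each $\text{PD}_i$, using that the denominator $\sum_{j}p(\mathbf{x}_i \in C_j \mid \boldsymbol{\theta})$ equals one to get $\text{PD}_i = \sum_{j=1}^{N_c} p(y_{C_j}=0 \mid \boldsymbol{\theta})\, p(\mathbf{x}_i \in C_j \mid \boldsymbol{\theta})$. Then I would substitute equation \eqref{eq:probclus0} for $p(y_{C_j}=0 \mid \boldsymbol{\theta})$, which expresses the cluster-level default rate as a weighted average of $(1-y_k)$ across the training sample, with weights $p(\mathbf{x}_k \in C_j \mid \boldsymbol{\theta})$ and normalization $\sum_k p(\mathbf{x}_k \in C_j \mid \boldsymbol{\theta})$.

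The next step is to sum over the training samples indexed by $i$ and exchange the order of summation. The key cancellation is that, after moving the $i$-sum inside the cluster sum, the factor $\sum_i p(\mathbf{x}_i \in C_j \mid \boldsymbol{\theta})$ appears and cancels exactly against the normalization in $p(y_{C_j}=0 \mid \boldsymbol{\theta})$. What remains is $\sum_j \sum_k p(\mathbf{x}_k \in C_j \mid \boldsymbol{\theta})(1-y_k)$; swapping these two sums and applying once more the partition-of-unity identity $\sum_j p(\mathbf{x}_k \in C_j \mid \boldsymbol{\theta}) = 1$ collapses the expression to $\sum_k (1-y_k)$, as required. Multiplying back the constant factor $(1-R)\,\text{EAD}$ then recovers $\text{EL}_{\text{actual}}$ from $\text{EL}_{\text{model}}$.

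The main obstacle is bookkeeping rather than a conceptual difficulty: one has to keep track of two nested sums over customers (one with index $i$ in the outer aggregation and another with index $k$ hidden inside $p(y_{C_j}=0 \mid \boldsymbol{\theta})$) and recognise that the denominator in the definition \eqref{eq:probclus0} of the cluster-level default rate was constructed precisely to invert the probabilistic cluster assignment after summing over the train set. The proposition is therefore really a structural self-consistency statement of the GMM-based probability construction: the aggregate of predicted default probabilities on the train set reproduces the empirical default count exactly, independently of the number or shape of the Gaussian components fitted by EM.
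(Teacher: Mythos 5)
Your proof is correct and follows essentially the same route as the paper's: substitute equation \eqref{eq:probtest0} into the expected-loss sum, insert equation \eqref{eq:probclus0}, exchange the order of summation so the normalization $\sum_i p(\mathbf{x}_i \in C_j \mid \boldsymbol{\theta})$ cancels, and finish with $\sum_j p(\mathbf{x}_k \in C_j \mid \boldsymbol{\theta}) = 1$. If anything, your version is cleaner: you use a distinct inner index $k$ and state the final partition-of-unity step explicitly, both of which the paper leaves implicit.
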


\begin{proof}
Let us assume that constant capitals, $\text{EAD}_i = \text{EAD}$, and recovery rates, $\text{R}_i = \text{R}$, are associated to all customers. We will show that for the train set, EL calculated by equation \eqref{eq:expected_loss} perfectly matches the EL calculated by labels $\mathbf{y} = \{y_1, \cdots, y_n\}$ in the train set with $y_i$ being either 0 or 1. To see this mathematically, note that substituting $\text{PD}_i$ with $p(y_i=0|\theta)$ from equation \eqref{eq:probtest1} to equation \eqref{eq:expected_loss}, gives the model EL as
\begin{equation}
 \text{EL}_\text{model} = \sum_{i=1}^n {\sum_{j=1}^{N_c}p(y_{C_j}=0 | \mathbf{\theta})p(\mathbf{x}_i \in C_j | \mathbf{\theta})}\times (1 - \text{R}) \times \text{EAD}.
\end{equation}then by substitution from equation \eqref{eq:probclus0}, and by assuming probability $ p(y_{C_j}=0 | \mathbf{\theta})$ is associated to all members of cluster $C_j$, we have
\begin{equation}
\begin{aligned}
\text{EL}_\text{model} = & \sum_{i=1}^n {\sum_{j=1}^{N_c} \frac{\sum_{i=1}^{n} \Big( p(\mathbf{x}_i \in C_j | \mathbf{\theta}) (1-y_i) \Big)}{\sum_{i=1}^{n} p(\mathbf{x}_i \in C_j | \mathbf{\theta})} p(\mathbf{x}_i \in C_j | \mathbf{\theta})} \times (1 - \text{R}) \times \text{EAD}\\
= & \sum_{j=1}^{N_c} \Big( {\frac{\sum_{i=1}^{n} \Big( p(\mathbf{x}_i \in C_j | \mathbf{\theta}) (1-y_i) \Big)}{\sum_{i=1}^{n} p(\mathbf{x}_i \in C_j | \mathbf{\theta})}} \sum_{i=1}^n p(\mathbf{x}_i \in C_j | \mathbf{\theta}) \Big) \times (1 - \text{R}) \times \text{EAD}\\
= & \sum_{j=1}^{N_c} {{\sum_{i=1}^{n} \Big( p(\mathbf{x}_i \in C_j | \mathbf{\theta}) (1-y_i) \Big)}}\times (1 - \text{R}) \times \text{EAD} = \text{ EL}_\text{actual}
\end{aligned}
\end{equation}and therefore, $\epsilon_\text{EL}=0$ on the train data. 
\end{proof}

This is not a coincidence that when R and EAD are assumed constant, this error for the train set is $\epsilon_\text{EL} = 0$. The reason, as shown mathematically above, is due to the fact that GMM-based default probabilities are calculated for each cluster according to equation \eqref{eq:probclus0}, in such a way that it represents how many of the individual samples are labeled as bad borrowers. We will show a numerical example of the proposition above in section \ref{sec:expected_loss}. Since the assumption of equal EADs is not realistic, we also provide numerical examples when EADs follow a normal distribution with a given mean and standard deviation.

\subsection{Profit-maximizing based on credit scoring} \label{profit-maximizing}

As lenders aim to maximize their profit from providing credit to their customers, modern scorecard solutions are considering profit maximization into their credit scoring architecture (\cite{kozodoi2019multi}). Here in this section, we discuss how our GMM-based credit scoring model can assist issuers to maximize their profit. As maximizing the profit is equivalent to minimizing the loss, we analyze both the loss and the income for the lender. 

Suppose that the score of a consumer, with credit feature $\mathbf{x_i}$, is $s(\mathbf{x_i})$. Let $M_0$ be the capital lent to the borrower. We suppose $M$ is the amount the borrower is obligated to pay to the lender, and without loss of generality and for simplicity purposes, we assume that the interest rate's effect on the loan amount is embedded in $M$. If the customer defaults, the amount $MR_i$ is paid to the lender, where $R_i$ is the recovery rate for $i^{th}$ consumer. Let $p(y = 1|\mathbf{x_i})$ be the conditional probability of being \textit{good} given the feature set $\mathbf{x_i}$. Therefore, at time $T$, when the loan must be paid back, the customer $i$ pays the amount $\mathcal{I}(\mathbf{x_i})$ to the lender
\begin{equation}
\mathcal{I}(\mathbf{x_i}) = \left\{\begin{array}{ll}
0, & \text{if rejecting $\mathbf{x_i}$},\\
M, & \text{if accepting $\mathbf{x_i}$ and be good customer}, \\
MR_i, & \text{if accepting $\mathbf{x_i}$ and be bad customer}.
\end{array}\right.
\end{equation}

The mathematical expectation of the amount paid to the lender for giving loan to the customer $\mathbf{x_i}$ is
\begin{equation}
\begin{aligned}
    \mathbb{E}(\mathcal{I}(\mathbf{x_i})|\mathbf{x_i}) &= Mp(y = 1|\mathbf{x_i}) \\
    &+  MR_i(1-p(y = 1|\mathbf{x_i})).
\end{aligned}
\end{equation}
The net profit is therefore $\mathbb{E}(\mathcal{I}(\mathbf{x_i})|\mathbf{x_i}) - M_0$ and the total income can be calculated by
\begin{equation}
\begin{aligned}
      \sum_{i = 1}^n \mathbb{E}(\mathcal{I}(\mathbf{x_i})|\mathbf{x_i}) &=   \sum_{i = 1}^n [Mp(y = 1|\mathbf{x_i}) \\
    & +  MR_i(1-p(y = 1|\mathbf{x_i}))] \\
    &= M (\sum_{i=1}^n [p(y = 1|\mathbf{x_i})(1-R_i) + R_i] ).\\
\end{aligned}
\end{equation}
By assuming the recovery rates are constant, $R_i = R$ for $i = 1, \dots, n$, we can further simplify the total income as
\begin{equation}
      \sum_{i = 1}^n \mathbb{E}(\mathcal{I}(\mathbf{x_i})|\mathbf{x_i}) =  nMR + M(1-R) \sum_{i=1}^n p(y = 1|\mathbf{x_i}).
\end{equation}
Now let us define a minimum decision threshold $p_{\text{min}}$ for the lowest probability such that only customers with $p(y=1|x_i) \geq p_{\text{min}}$ are given loans. The expected income is
\begin{equation}
\begin{aligned}
    \sum_{i = 1}^n \mathbb{E}(\mathcal{I}(\mathbf{x_i})|\mathbf{x_i}) &= mMR + M(1-R) (\sum_{i=1}^m p(y = 1|\mathbf{x_i})) \\
    & \geq mMR + M(1-R) \sum_{i=1}^m p_{\text{min}} \\
    &= mM\left(R + p_{\text{min}} (1-R)\right),
\end{aligned}
\end{equation}
where $m$ is the number of customers with pay back probabilities higher than $p_{\text{min}}$. The above equation guarantees a lower bound for the lender's income by selecting $p_{\text{min}}$. 

The amount of loss for the customer $\mathbf{x_i}$, $\mathcal{L}(\mathbf{x_i})$, is
\begin{equation}
\mathcal{L}(\mathbf{x_i}) = \left\{\begin{array}{ll}
        0, & \text{if rejecting $\mathbf{x_i}$},\\
        0, & \text{if accepting $\mathbf{x_i}$ and be good customer},\\
         M(1-R_i), & \text{if accepting $\mathbf{x_i}$ and be bad customer},
        \end{array}\right.
\end{equation}
and the total expected loss for the lender from giving loan to $n$ customers, $\mathscr{L}(n)$, is therefore
\begin{equation}\label{eq:total_expected_loss}
 \begin{aligned}
 \mathscr{L}(n) =& \sum_{i=1}^n \mathbb{E}(\mathcal{L}(\mathbf{x_i})) = \sum_{i=1}^n M(1 - R_i) (1 - p(y = 1|\mathbf{x_i}))\\
 =& nM(1 - R)  - M(1 - R)\sum_{i=1}^n p(y = 1|\mathbf{x_i})),
 \end{aligned}
\end{equation} 
where in the last equation, we have assumed the recovery rates are constant across all clients, $R_i = R$. Equation \eqref{eq:total_expected_loss} provides a linear relationship between the lender's loss and the default probabilities. Obviously, the total loss is a function of $n$, the number of customers applying for a loan.



By choosing the decision threshold probability, $p_{\text{min}}$, we can consequently calculate an upper bound for the total expected loss as
\begin{equation} \label{eq:upper_bound}
\begin{aligned}
    \mathscr{L}(n) &= mM(1 - R)  - M(1 - R)\sum_{i=1}^m p(y = 1|\mathbf{x_i}))\\
    & \leq mM(1 - R) - mM(1-R) p_{\text{min}} \\
    &= mM(1 - R)(1 - p_{\text{min}}).
\end{aligned}
\end{equation}
Clearly, $m=m_{p_{\text{min}}}$ is a function of decision threshold $p_{\text{min}}$ and is related to the upper bound of the total expected loss, $\mathscr{L}(n)^*$, by the non-linear equation
\begin{equation} \label{eq:loss}
    \mathscr{L}(n)^* = mM(1 - R)(1 - p_{\text{min}}).
\end{equation}
This non-linear function can be solved numerically to give $p_\text{min}$ as a function of total expected loss
\begin{equation}
        p_{\text{min}} = \mathcal{F}(\frac{\mathscr{L}(n)^*}{M(1 - R)}),
\end{equation}
Therefore, $p_{\text{min}}$ can be uniquely determined by inputs $\mathscr{L}(n)^*$, $M$ and $R$. The non-linear function $\mathcal{F}$ depends on the relationship between the number of clients to whom loans are issued, $m$, and the decision threshold, $p_{\text{min}}$. We estimate this functional relationship by the train set data and then apply it to the test set. 

The average amount of loss for all applicants has the upper bound, $\mathscr{L}' = \frac{\mathscr{L}(n)^*}{n}$ given by
\begin{equation}
    \mathscr{L}' = \frac{m}{n}M(1 - R)(1 - p_{\text{min}}).
\end{equation}
This equation shows the amount of loss per customer which directly depends on the ratio of clients given a loan to all clients, $\frac{m}{n}$.

\section{Empirical Analysis}\label{sec:emp}

In this section, we provide our empirical analysis on a number of credit data sets to forecast the probability of default for each customer based on the methodology in chapter \ref{sec:Methodology}. We then use our model to classify clients, calculate the expected loss and the upper bounds for total loss based on estimated probabilities discussed in chapter \ref{sec:app}.

\subsection{Data}

We use a diversified group of five data sets from the UCI Repository of Machine Learning Database including German Credit Data, Australian Credit Approval and Japanese Credit Screening, Bank Marketing Data Set (Telemarketing) used by \cite{moro2014data}, default of credit card clients Data Set in Taiwan used by \cite{yeh2009comparisons}  (\cite{Dua:2019}), as well as a large data set from the U.S. Small Business Administration (SBA) used by \cite{li2018should} in some groups for our model.
 
The German Credit Data consists of 1000 customers. The Australian Credit Approval is composed of 307 good customers and 383 bad customers, and Japanese Credit Screening has 297 and 364 good and bad customers, respectively. The number of instances and features in Taiwanese data set is 30000 and 24 feature. For the telemarketing data-set, we have 45211 instances and 17 attributes. In this data set, the default is one of the variables which has been used as the target in our model. The SBA data set has 27 features and more than 300000 of instances are included in the our study.
 


\subsubsection{Data Preprocessing}

We transform categorical features to numerical features before inputting into the GMM. We use dummy variables for converting the categorical variables to numerical features. We also use a one-hot-encoder for German data set to convert the dummies to binary variables, increasing the number of features to 62. The German data is imbalanced, having 30\% unhealthy applicants against 70\% healthy ones. In order to balance this data set, we apply a Synthetic Minority Oversampling Technique (SMOTE). This method is also applied to other data sets except for the Australian and Japanese data sets, because they are fairly balanced. Note that SMOTE is only applied to the training sets. \cite{chawla2002smote} propose this method for handling the unbalanced data sets, suggesting to draw a line between samples which are same in label, and randomly generating new fake samples on that line. Therefore, if $\mathbf{x_1}$ and $\mathbf{x_2}$  are two such samples with the same label, then the new generated sample is
\begin{equation}
    \mathbf{x_{\text{generated}}} = (1-\alpha)\mathbf{x_1} + \alpha\mathbf{x_2}
\end{equation}
where $\alpha$ is a sample drawn from a uniform distribution $\mathcal{U}[0,1]$. By randomly selecting $\mathbf{x_1}$ and $\mathbf{x_2}$ and generating new samples enough times, the data set will be balanced. 

We split the data set to train and test sets with the ratio of $\frac{2}{3}$ and $\frac{1}{3}$. For data sets from SBA, because the number of instances are fairly high, we use the ratio of 50 \% for training and testing set. Using SMOTE, the number of healthy and unhealthy instances in the training set becomes equal. In Australian and Japanese data sets, there is no need to use SMOTE since they are fairly balanced. 

\subsection{Estimating the number of clusters}

To find the number of clusters, $N_c$, we use two criteria which are Akaike Information Criterion (AIC) and Bayesian Information Criterion (BIC)
\begin{equation}
\begin{aligned}
    AIC = & 2k - 2 \ln{\hat{L}},\\
    BIC = & \ln(n)k - 2 \ln{\hat{L}},
\end{aligned}
\end{equation}
where 
\begin{equation}
\hat{L} = p(\mathbf{x}|\mathbf{\theta}, N_c)
\end{equation}
is the maximized value of the likelihood function of the Gaussian Mixture model with $N_c$ components, $n$ is the number of the train set samples, $k$ is the number of model parameters, and $\mathbf{x}$ is a train sample. The AIC and BIC methods give the optimal number of GMM components when they reach their minimum level. Figure \ref{figAICBICall} shows the AIC and BIC values for different $N_c$s, proposing $N_c = 9$ for German, Japan and Australian sets, $N_c =80$ for Taiwanese data, $N_c = 55$ for telemarketing data, and $N_c =28$ for SBA case is used. We also used 90, 145, and 150 number of components for different sub-groups of SBA. 

The probability of default for individuals in both train and test sets is shown in figure \ref{fig:probability_default}. Our model, as figure \ref{fig:probability_default} shows, is capable of distinguishing good and bad applicants. As the train set of the German data set is imbalanced, it is clear from panels (a) and (b) of the figure that the model is not estimating the default probability of such samples correctly. However, as illustrated in panels (c), (d), (e) and (f), in Japanese and Australian data sets which are fairly balanced, the estimated probabilities of default for the train and test sets are along the same lines. 

\subsection{Measuring probability of default}

After creating $N_c$ clusters using the information criteria approach from last section, the results for the German, Australian and Japanese data sets are shown in table \ref{tab_clusters}. The GMM weights represent the relative importance of each cluster. In the forth row of each sub-table, we assign a probability $p(\mathbf{x_i}\in C_j|\boldsymbol{\theta})$ to a given sample with feature $\mathbf{x_i}$, indicating how likely $\mathbf{x_i}$ belongs to the cluster $C_j$. The ratio associated to each cluster showing the number of good to bad customers in that specific cluster is related to the cluster-based probability, computed by equation \ref{eq:probability}.

For the sake of simplicity, in order to classify customers, we set a decision threshold $D = \frac{1}{2}$. Each customer with $p(x|\theta) \geq D$ will be labeled as $1$, and otherwise as $0$. Therefore, our probabilistic model can also work as a binary classifier. We evaluate our model by computing the confusion matrix and accuracy measures of precision, recall, $F1$, and the ROC-AUC scores. The confusion matrix shows True Positives (TP), and True Negatives (TN), in the first row, and False Positives (FP) and False Negatives (FN) in the second row, respectively.

For a comparative analysis of our GMM-based approach, we use two other benchmarks; Support Vector Machine (SVM), and Logistic regression (LR). In our comparison, we use an SVM with radial basis kernel and a regularization parameter $C=1$. Model accuracy, as the ratio of the number of true predictions to the number of samples, are shown in table \ref{tab_accuracy2}. Looking at the table, it is clear that GMM, when compared to other two benchmarks, provides a reasonable level of accuracy on both train and test sets.

The SVM approach is not able to perform as well as GMM due to the fact that it can not separate unstructured data without normalization while GMM is a more flexible approach which is capable of exploring directional behavior of the data with its embedded covariance structure. In other words, in contrary to the GMM, the SVM cannot handle the non-normalized data. Since we used the same structure of data for all models, it affects the SVM's accuracy. The recall, precision, and the $F_1$ score for the above classifiers are shown in table \ref{tab_accuracy2}. 

\subsection{Expected loss}\label{sec:expected_loss}

The expected loss will be computed under two different conditions: first when the loan amounts are constant for all customers, and second when the loan amounts follow a normal distribution. For the first case, we suppose that the recovery rate for all customers is $0.5$, and the constant amount of loan is assumed to be \$$1000$. For loans with inconstant amounts, we assume that the retail portfolio of loans has individual loan amounts which follow a normal distribution with a certain mean and standard deviation. More precisely, we assume that each customer's loan amount is independently drawn form a normal distribution with a mean of \$$1000$  and a standard deviation of \$$100$. In each case, we calculate the expected loss using our GMM-based probabilistic model, and compare it with the actual loss for both train and test sets. As discussed in section \ref{sebsec:app}, in the train set when the loan amount is constant, the actual loss and the expected loss are the same. Due to some reasonable errors in our GMM-based probability calculation, the actual and expected loss amounts differ on the test set. The frequency of customers' actual loss and predicted loss for train and test sets, with fixed and random amounts of loans are shown in figure \ref{fig:fixed&random_loan}. In each panel in figure \ref{fig:fixed&random_loan}, the upper left quarter shows the predicted loss with $M = 1000$, and the lower left quarter demonstrates the bar chart of actual losses with the same amount of $M$. In the upper right quarter, the bar chart of predicted loss when $M$ is random is presented. The bar chart of actual loss for a random $M$ is represented in the lower left quarter. It is clear that the predicted loss for those defaulting, is less than the actual loss, while the predicted loss for customers not defaulting is more than $0$. GMM can properly forecast default probabilities and models the distribution of the loss. When compared with the real outcomes of 0 or 1 in the train set, our GMM-based model has a zero error on the train set, and provides acceptable estimations of the expected loss on the test set.

For calculating the probability with SVM, we use the method proposed by \cite{platt1999probabilistic}, converting the SVM's scores to probabilities using the Sigmoid function. We use the same loan amounts and recovery rates as for the GMM-based model to find the expected loss for SVM and LR. The actual loss and expected loss for each model and the differences are shown in table \ref{tab_prob_loss}. 
When the amount of loan is constant, the error of predicted loss for GMM is significantly lower than that predicted by SVM and LR, showing that the GMM is working well with biased data in comparison to SVM and LR. When the amount of loan is random, the GMM's error is less than SVM and LR's in the train set.  In the German test set, the GMM's prediction error is significantly less than other models. For Japanese, Australian, Taiwanese test sets the GMM's error is also less than SVM and LR. For the SBA datasets, however, the other models are working better than GMM. 


\subsection{Managing the loss}

In this section, we provide numerical results of our decision threshold selection technique and show how it can be  used by a credit lender to manage the amount of the loss due to bad loans. The main objective here is to find a decision threshold $p_{\text{min}}$ for managerial decision making to give loans to customer with feature $\mathbf{x_i}$ when $p(y=1|\mathbf{\theta}) \geq p_{\text{min}}$. As shown in equation \ref{eq:loss}, the amount of $\mathscr{L}(n)^*$ depends on the number of customers, loan amounts, recovery rate and the decision threshold. In figure \ref{fig:mn}, we investigate the relationship between $\frac{m}{n}(1-p_{\text{min}})$ and $p_{\text{min}}$, where $m$ is the number of customers having $p(y=1|\mathbf{x_i}) \geq p_{\text{min}}$. We also calculate the average loss per customer as $\mathscr{L}(n)' = \frac{\mathscr{L}(n)^*}{n}$.

By assuming $M(1-R)$ is constant, $\mathscr{L}(n)'$ is linearly proportional to $\frac{m}{n}(1-p_{\text{min}})$. Given this value for the train set, we can apply it for the test set and calculate the loss. Figure \ref{fig:mn} provides a comparison on train and test sets. As shown, $\frac{m}{n}(1-p_{\text{min}})$ for train and test sets in Australia and Japan data sets are very close to each other, and provide an accurate prediction of loss for test set. In the German data set, however, since the data is imbalanced, we observe a lag between two lines of train and test sets. Giving loans based on $p_{\text{min}}$, the actual average loss $\mathscr{L}(n)'$ for both train and test sets are shown in figure \ref{fig:loss_fix}. The accuracy of the predicted losses depends on the relationships estimated in figure \ref{fig:mn}.

We calculate an upper bound for the loss according to equation \ref{eq:upper_bound}, for both train and test sets. The results are shown in figure \ref{fig:upper_bound}. As this figure shows, the actual loss is below the upper bound we estimated in the train set (panels a, b and c). In the test sets (panels d, e and f),  the actual loss is under the upper bound, for decision thresholds below $p_{\text{min}} = 0.8$ for Australian and Japanese data sets, and $p_{\text{min}} = 0.7$ for German data set.
\section{Conclusion and Future Research}\label{sec:conc}

We proposed a probabilistic unsupervised learning approach for predicting default probability of consumer credit borrowers using GMM. Contrary to many other supervised learning approaches, we have used a flexible and efficient unsupervised learning technique. We used Gaussian Mixture clustering to group similar samples and then associated a cluster-wise probability to each group. From the default probability defined on the cluster level, we could derive default probabilities for individual samples. 

Our GMM-based probabilistic model has interesting applications in risk management of retail loan portfolios. Thanks to the probabilistic nature of our approach, we could give the user the flexibility to choose a decision threshold as a lower probability bound for decision makers to use for credit allocations to maximize the profit of their credit business. In addition, we showed that using our GMM-based default probabilities for calculating the expected loss due to bad loans, we can forecast the loss amount very well in both train and test sets. 

As there has been little work on the applications of Gaussian Mixture models in the credit scoring literature, for future research, we think that like other previously published supervised techniques such as neural networks, support vector machines and decision trees, it might be interesting to see the benefits of using hybrid GMM-based models and how their probabilistic nature benefits the literature. Another avenue of research could pursue the applications of GMM-based techniques in other classification problems in business such as customer churn prediction. 

\clearpage

\setcitestyle{numbers} 
\bibliographystyle{plainnat}
\bibliography{refs}

\clearpage

\appendix
\section*{Appendix}
\section{Numberical Results on German, Japenese and Australian Data Sets}




\begin{figure}[h!]
\centering
\subfigure[Germany]{{\includegraphics[width=3.5cm]{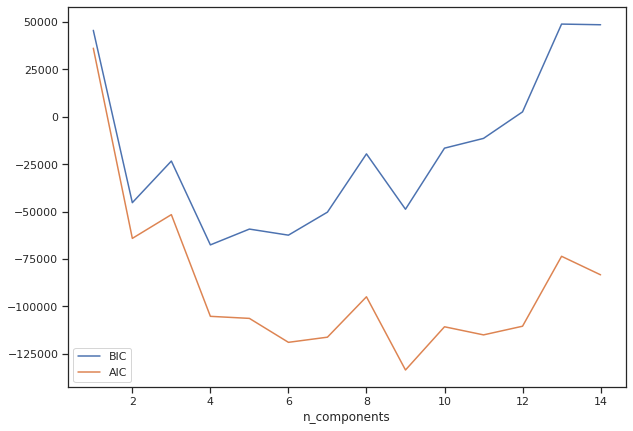}}}
\quad
\subfigure[Japan]{{\includegraphics[width=3.5cm]{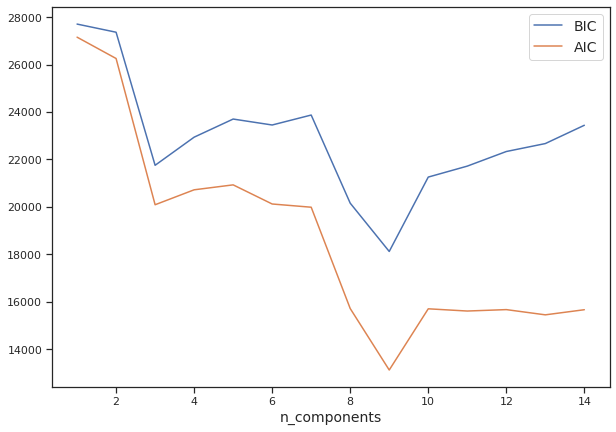} }}
\quad
\subfigure[Australia]{{\includegraphics[width=3.5cm]{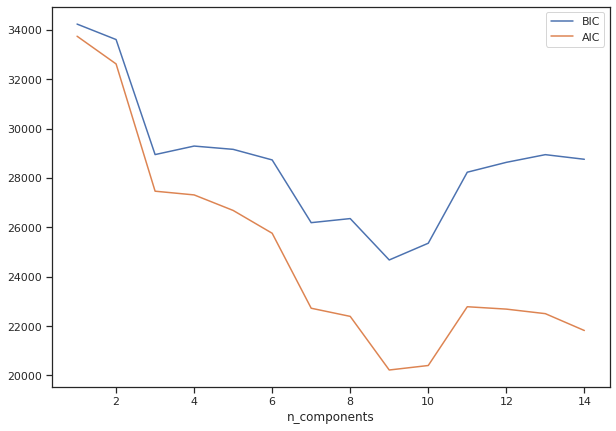} }} \\
\subfigure[Telemarketing]{{\includegraphics[width=3.5cm]{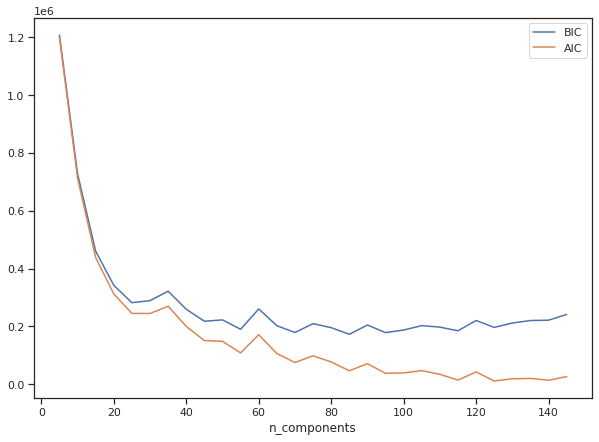}}}
\quad
\subfigure[Taiwanese]{{\includegraphics[width=3.5cm]{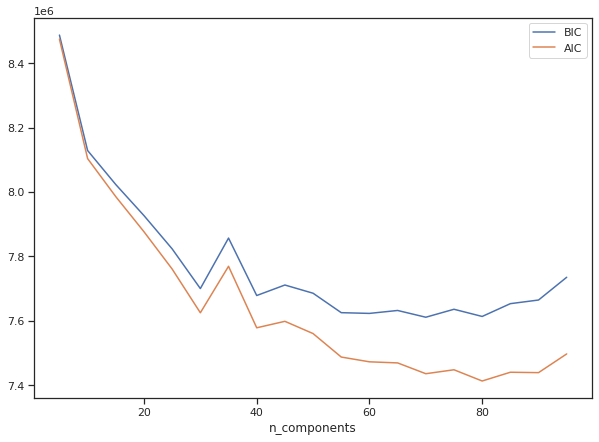} }}
\quad
\subfigure[SBAcase]{{\includegraphics[width=3.5cm]{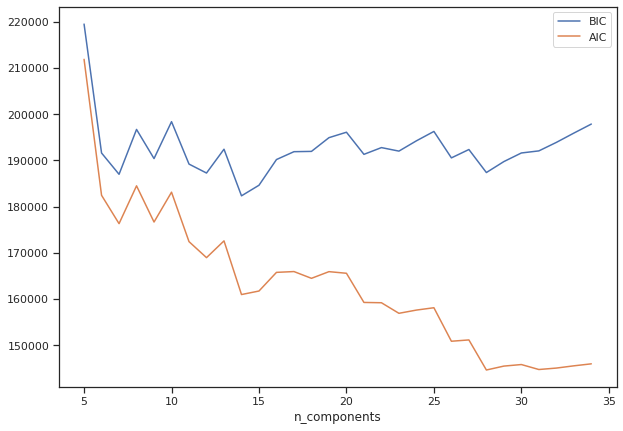} }}\\
\subfigure[SBA 1]{{\includegraphics[width=3.5cm]{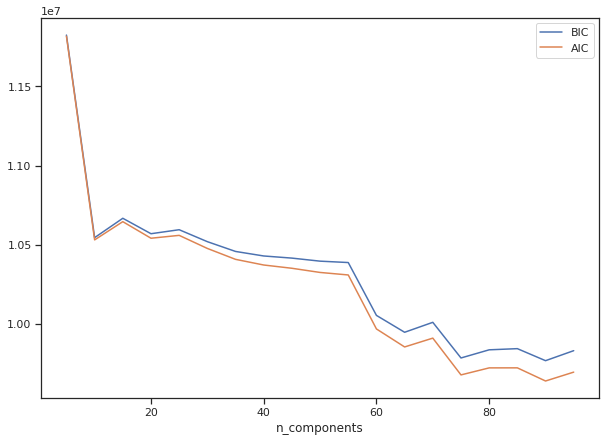}}}
\quad
\subfigure[SBA 2]{{\includegraphics[width=3.5cm]{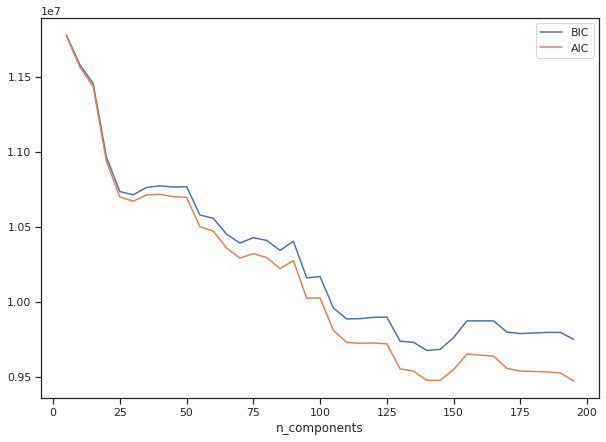} }}
\quad
\subfigure[SBA3]{{\includegraphics[width=3.5cm]{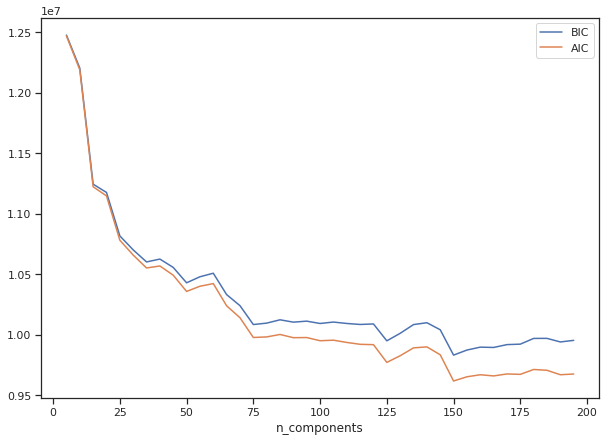} }}
\caption{AIC and BIC measures for optimal number of components (horizontal axis: the number of Gaussian components, vertical axis: the information criteria).}
\label{figAICBICall}%
\end{figure}


    

\begin{table}
\centering
\caption{Clusters' features in the German, Japanese, and Australian Data sets}
\begin{tabular}{llccccccccc}
\toprule
 \multicolumn{2}{l}{\multirow{2}{*}{ \textbf{German Data Set}}} & \multicolumn{9}{c}{Clusters} \\
 \cmidrule{3-11}
          &  & \textbf{1} & \textbf{2} & \textbf{3} & \textbf{4} & \textbf{5} & \textbf{6} & \textbf{7} & \textbf{8} & \textbf{9} \\
    \midrule
    \multirow{5}[2]{*}{\textbf{Train}} & \textbf{Good Customers} & 31    & 20    & 2     & 2     & 76    & 9     & 20    & 31    & 267 \\
          & \textbf{Bad Customers} & 20    & 26    & 202   & 9     & 26    & 18    & 16    & 23    & 118 \\
          & \textbf{Ratio} & 1.55  & 0.77  & 0.01  & 0.22  & 2.92  & 0.50  & 1.25  & 1.35  & 2.26 \\
          & \textbf{Probability} & 0.61  & 0.43  & 0.01  & 0.18  & 0.75  & 0.33  & 0.56  & 0.57  & 0.69 \\
          & \textbf{Weight} & 0.06  & 0.05  & 0.22  & 0.01  & 0.11  & 0.03  & 0.04  & 0.06  & 0.42 \\
    \cmidrule{2-11}
    \multirow{3}[1]{*}{\textbf{Test}} & \textbf{Good Customers} & 2     & 0     & 1     & 0     & 72    & 0     & 0     & 6     & 161 \\
          & \textbf{Bad Customers} & 1     & 0     & 0     & 0     & 39    & 0     & 0     & 0     & 52 \\
          & \textbf{Ratio} & 2.00  & --    & --    & --    & 1.85  & --    & --    & --    & 3.10 \\
\\
    \midrule
 \multicolumn{2}{l}{\multirow{2}{*}{\textbf{Japanese Data Set}}} & \multicolumn{9}{c}{Clusters} \\
 \cmidrule{3-11}
          &  & \textbf{1} & \textbf{2} & \textbf{3} & \textbf{4} & \textbf{5} & \textbf{6} & \textbf{7} & \textbf{8} & \textbf{9} \\
    \midrule
    \multirow{5}[2]{*}{\textbf{Train}} & \textbf{Good Customers} & 11    & 1     & 147   & 3     & 1     & 18    & 7     & 2     & 16 \\
          & \textbf{Bad Customers} & 40    & 0     & 24    & 0     & 0     & 42    & 0     & 0     & 128 \\
          & \textbf{Ratio} & 0.28  & --    & 6.13  & --    & --    & 0.43  & --    & --    & 0.13 \\
          & \textbf{Probability} & 0.21  & 1.00  & 0.86  & 1.00  & 1.00  & 0.30  & 1.00  & 1.00  & 0.11 \\
          & \textbf{Weight} & 0.12  & 0.00  & 0.39  & 0.01  & 0.00  & 0.14  & 0.02  & 0.00  & 0.33 \\
    \cmidrule{2-11}
    \multirow{3}[1]{*}{\textbf{Test}} & \textbf{Good Customers} & 3     & 0     & 72    & 0     & 0     & 7     & 0     & 0     & 9 \\
          & \textbf{Bad Customers} & 24    & 0     & 16    & 0     & 0     & 18    & 0     & 0     & 72 \\
          & \textbf{Ratio} & 0.13  & --    & 4.50  & --    & --    & 0.39  & --    & --    & 0.13 \\
    \\
    \midrule
 \multicolumn{2}{l}{\multirow{2}{*}{\textbf{Australian Data Set}}} & \multicolumn{9}{c}{Clusters} \\
 \cmidrule{3-11}
          &  & \textbf{1} & \textbf{2} & \textbf{3} & \textbf{4} & \textbf{5} & \textbf{6} & \textbf{7} & \textbf{8} & \textbf{9} \\
    \midrule
    \multirow{5}[2]{*}{\textbf{Train}} & \textbf{Good Customers} & 40    & 2     & 2     & 4     & 10    & 19    & 2     & 7     & 120 \\
          & \textbf{Bad Customers} & 118   & 0     & 0     & 0     & 4     & 114   & 0     & 7     & 11 \\
          & \textbf{Ratio} & 0.34  & --    & --    & --    & 2.50  & 0.17  & --    & 1.00  & 10.91 \\
          & \textbf{Probability} & 0.25  & 1.00  & 1.00  & 1.00  & 0.71  & 0.14  & 1.00  & 0.50  & 0.92 \\
          & \textbf{Weight} & \multicolumn{1}{r}{\textcolor[rgb]{ .129,  .129,  .129}{0.34}} & 0.00  & 0.00  & 0.01  & 0.03  & 0.29  & 0.00  & 0.03  & 0.28 \\
    \cmidrule{2-11}
    \multirow{3}[1]{*}{\textbf{Test}} & \textbf{Good Customers} & 20    & 0     & 0     & 0     & 2     & 13    & 0     & 0     & 66 \\
          & \textbf{Bad Customers} & 54    & 0     & 0     & 0     & 1     & 62    & 0     & 0     & 12 \\
          & \textbf{Ratio} & 0.37  & --    & --    & --    & 2.00  & 0.21  & --    & --    & 5.50 \\    
    \bottomrule
\end{tabular}%
\label{tab_clusters}%
\end{table}%



\begin{figure}
\centering
\subfigure[Telemarketing Train set]{{\includegraphics[width=5.61cm]{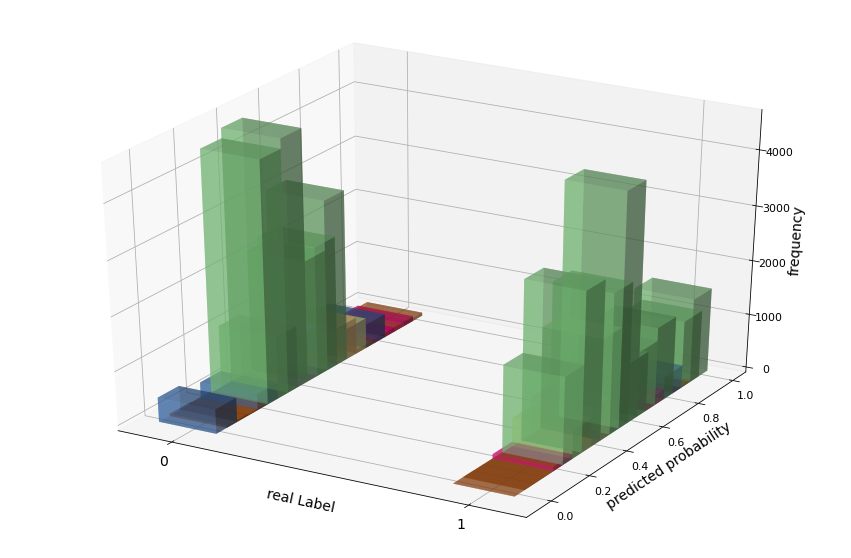}}}
\subfigure[Telemarketing Test Set]{{\includegraphics[width=5.61cm]{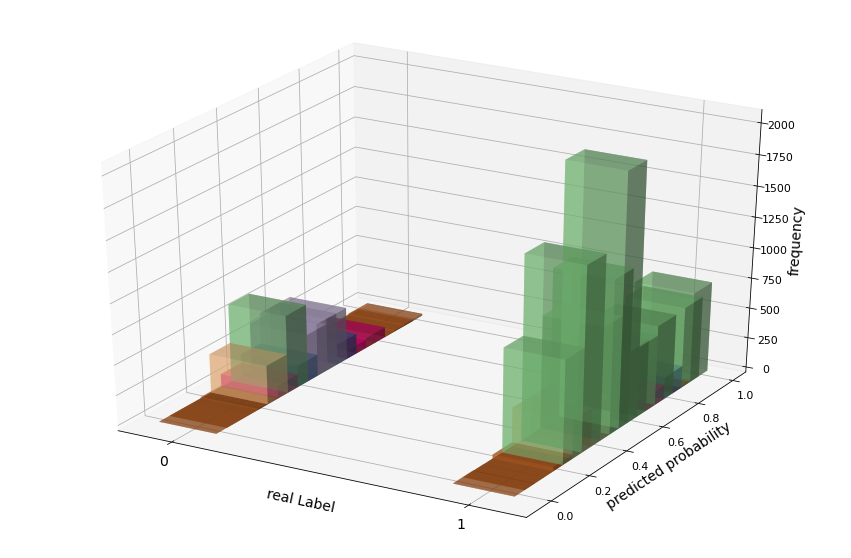} }}\\
\subfigure[Taiwanese Train set]{{\includegraphics[width=5.61cm]{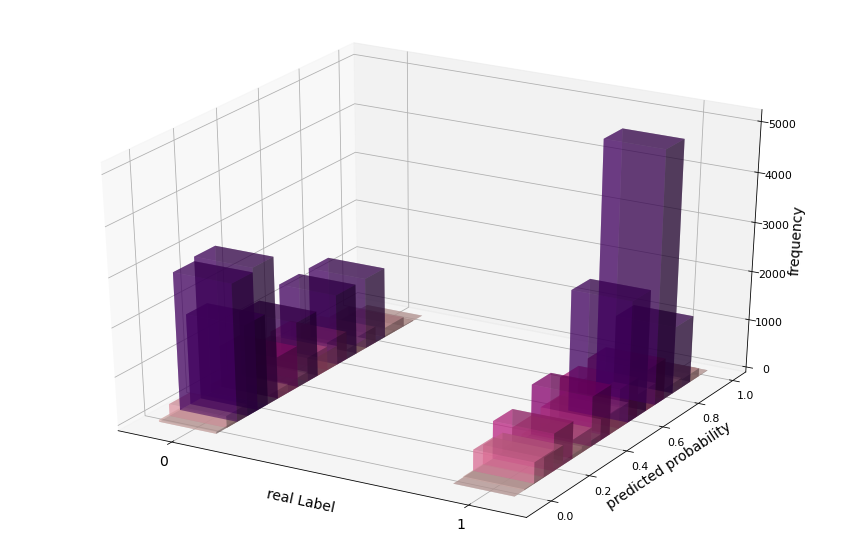}}}
\subfigure[Taiwanese Test Set]{{\includegraphics[width=5.61cm]{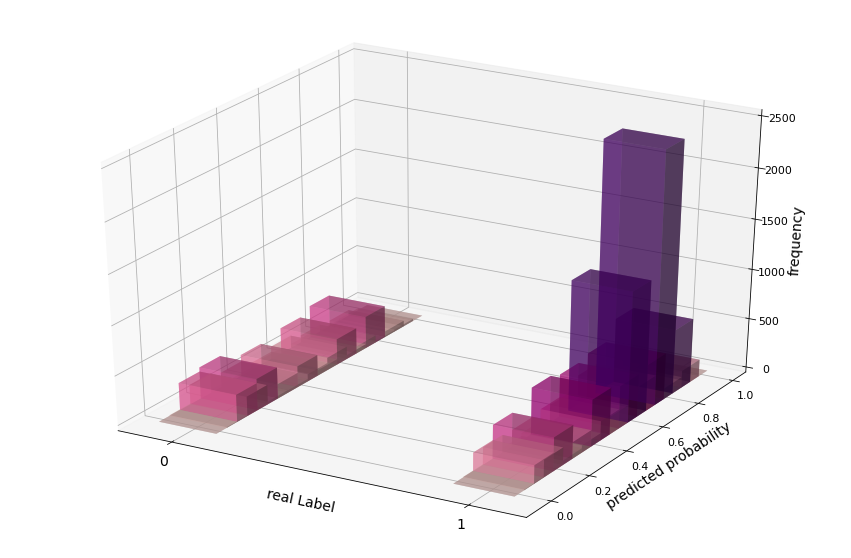} }}\\
\subfigure[SBA-case Train set]{{\includegraphics[width=5.61cm]{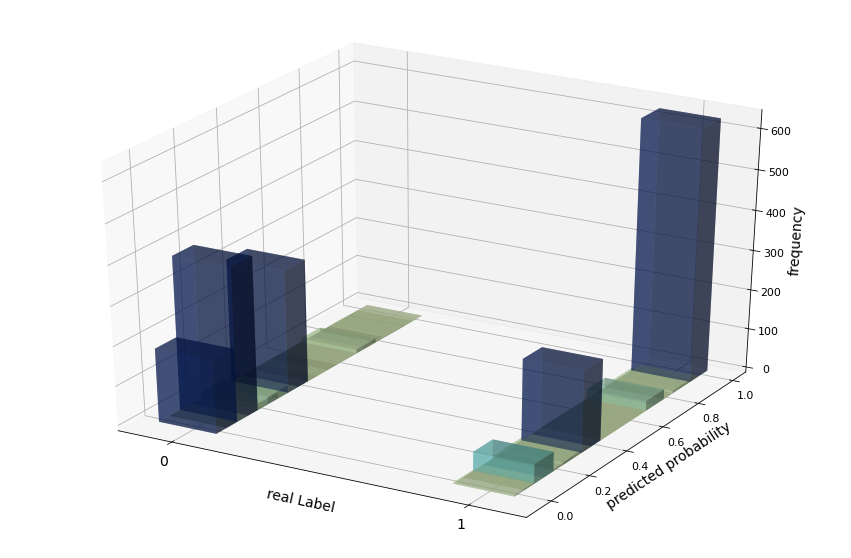}}}
\subfigure[SBA-case Test Set]{{\includegraphics[width=5.61cm]{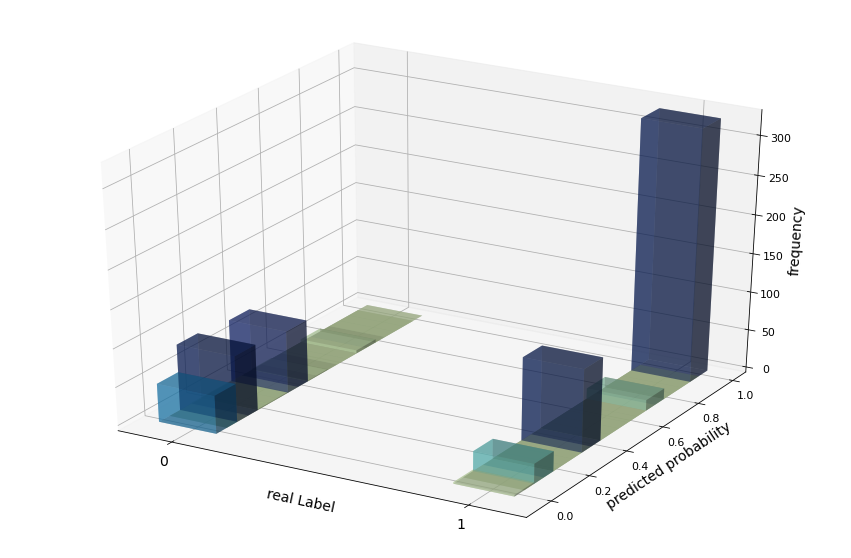} }}%
\caption{The probability of default for individuals Telemarketing, Taiwanese and SBA-case data sets.}
\label{fig:probability_default}
\end{figure}

\begin{figure}
\centering
\subfigure[Germany Train set]{{\includegraphics[width=5.61cm]{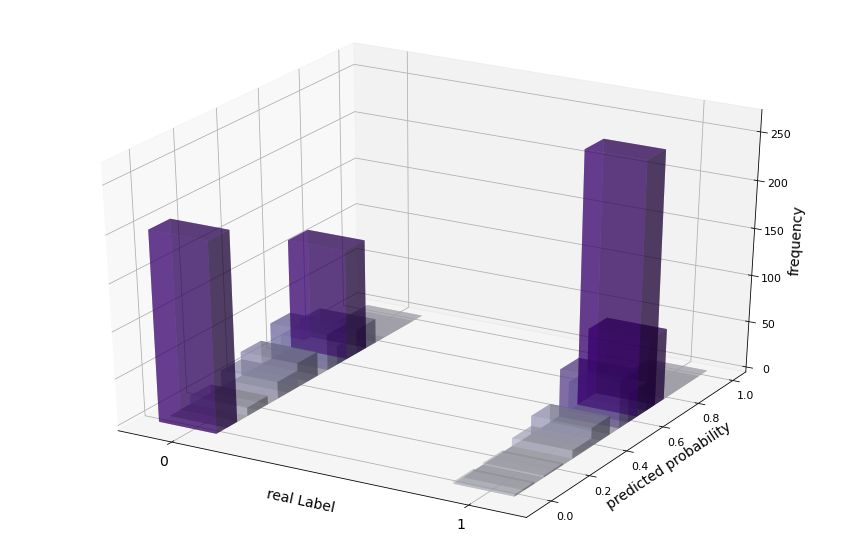}}}
\subfigure[Germany Test Set]{{\includegraphics[width=5.61cm]{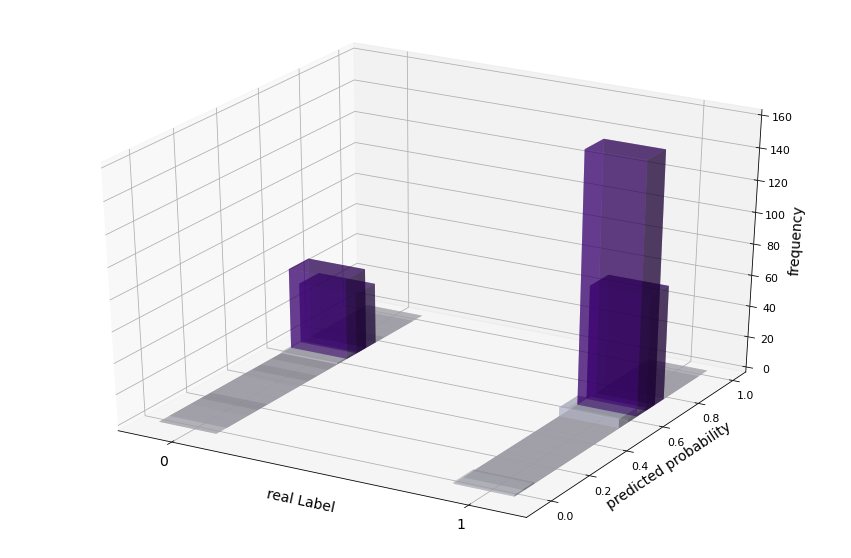} }}\\
\subfigure[Australia Train set]{{\includegraphics[width=5.61cm]{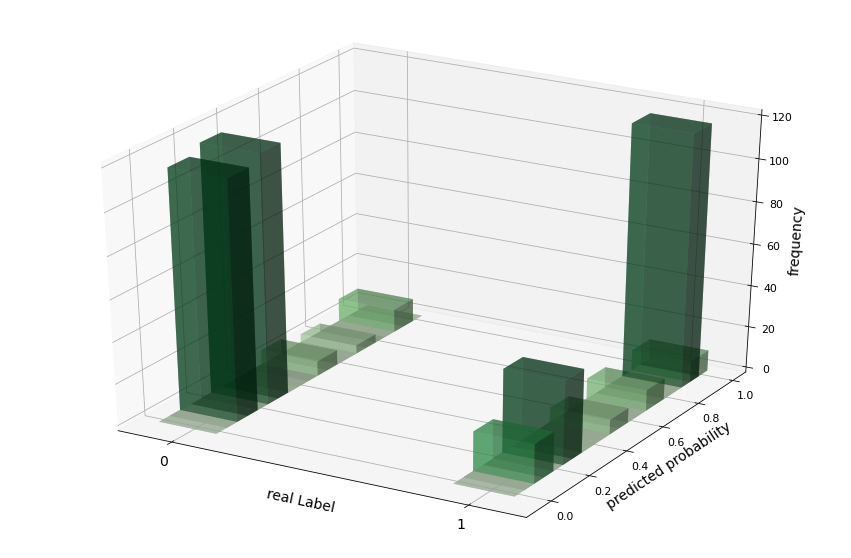}}}
\subfigure[Australia Test Set]{{\includegraphics[width=5.61cm]{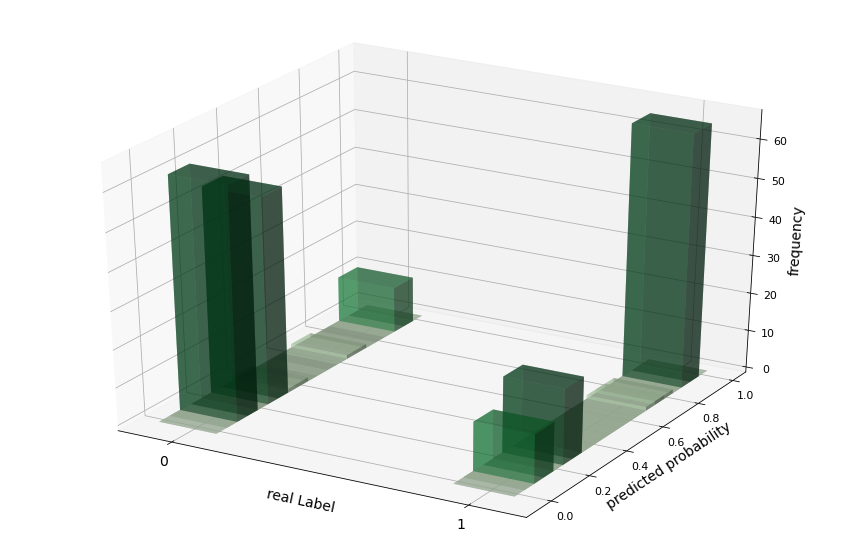} }}\\
\subfigure[Japan Train set]{{\includegraphics[width=5.61cm]{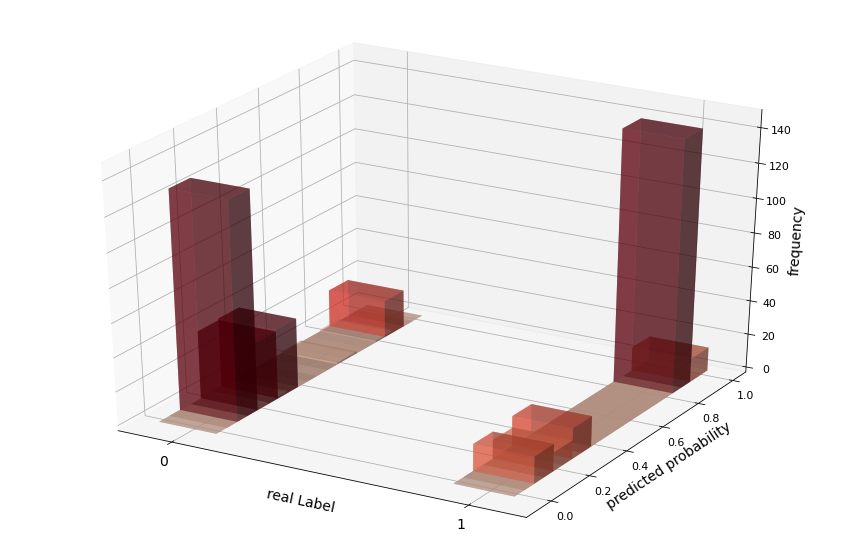}}}
\subfigure[Japan Test Set]{{\includegraphics[width=5.61cm]{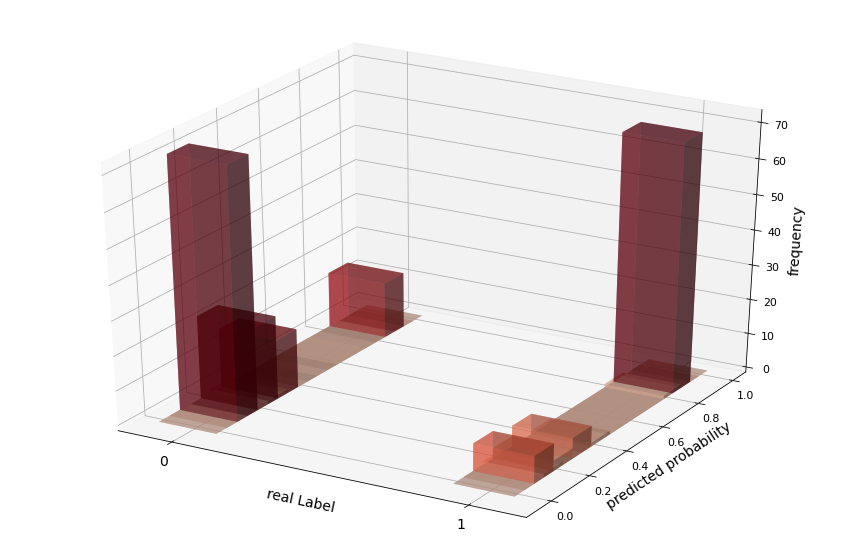} }}%
\caption{The probability of default for individuals German, Australian and Japanese data sets.}
\label{fig:probability_default_2}
\end{figure}

\begin{figure}
\centering
\subfigure[SBA (first 100 k selected group) Train set]{{\includegraphics[width=5.61cm]{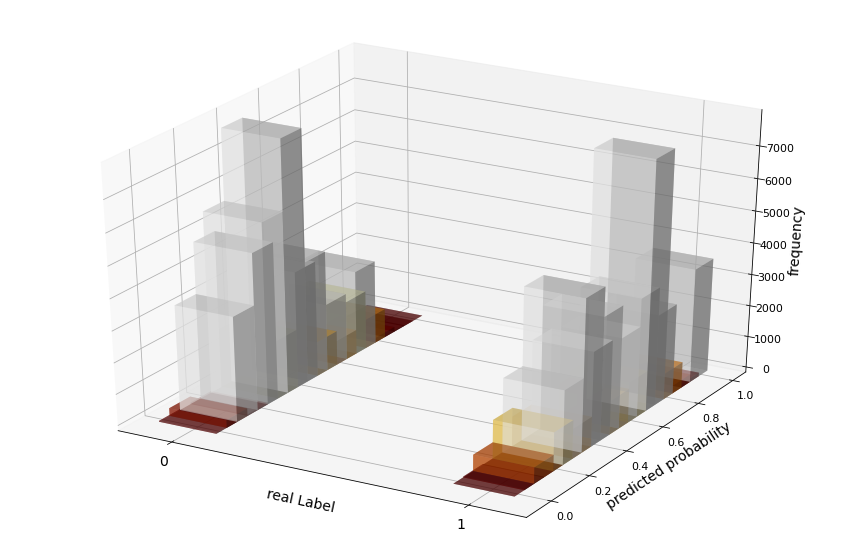}}}
\subfigure[SBA (first 100 k selected group) Test Set]{{\includegraphics[width=5.61cm]{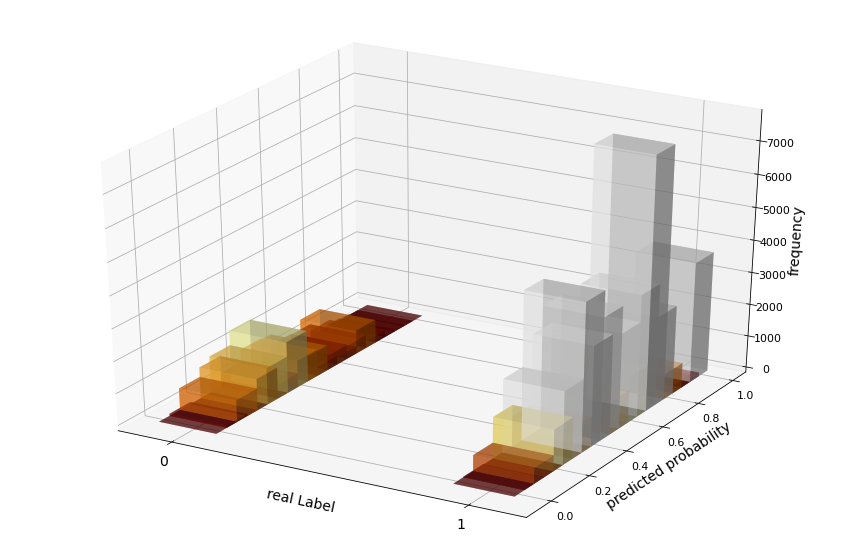} }}\\
\subfigure[SBA (second 100 k selected group) Train set]{{\includegraphics[width=5.61cm]{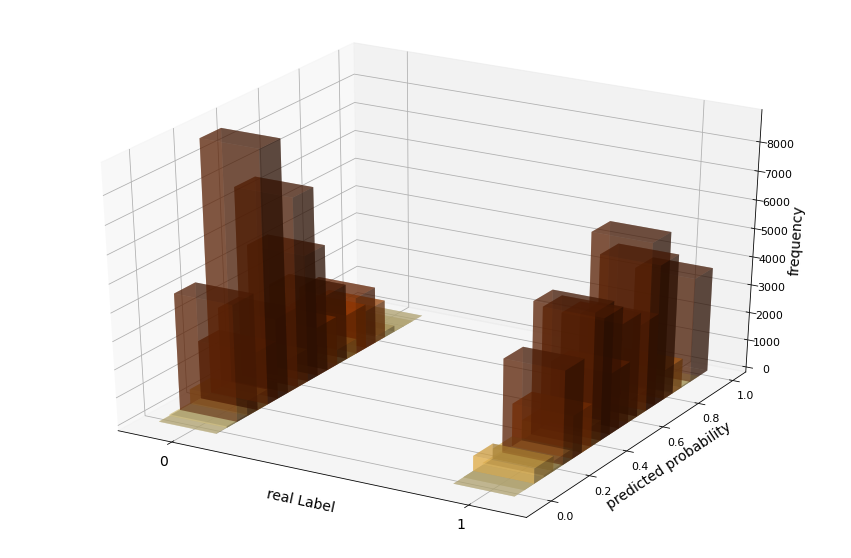}}}
\subfigure[SBA (second 100 k selected group) Test Set]{{\includegraphics[width=5.61cm]{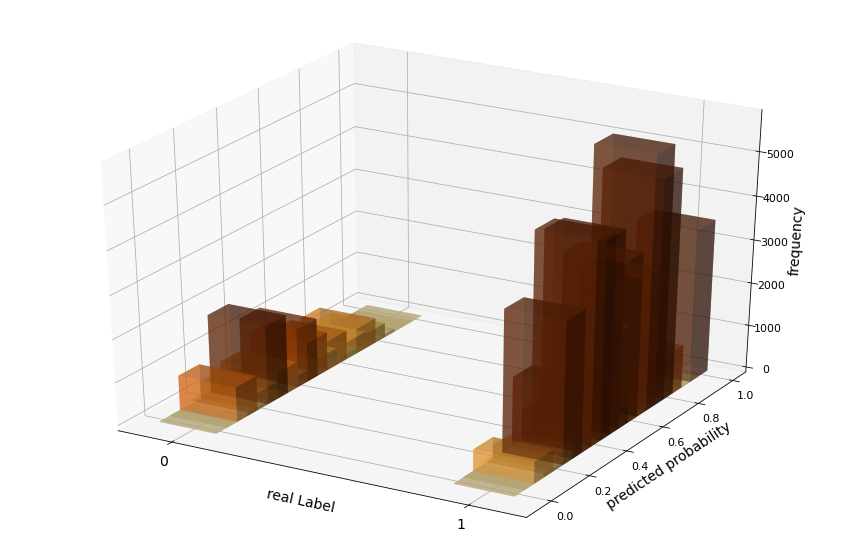} }}\\
\subfigure[SBA (third 100 k selected group) Train set]{{\includegraphics[width=5.61cm]{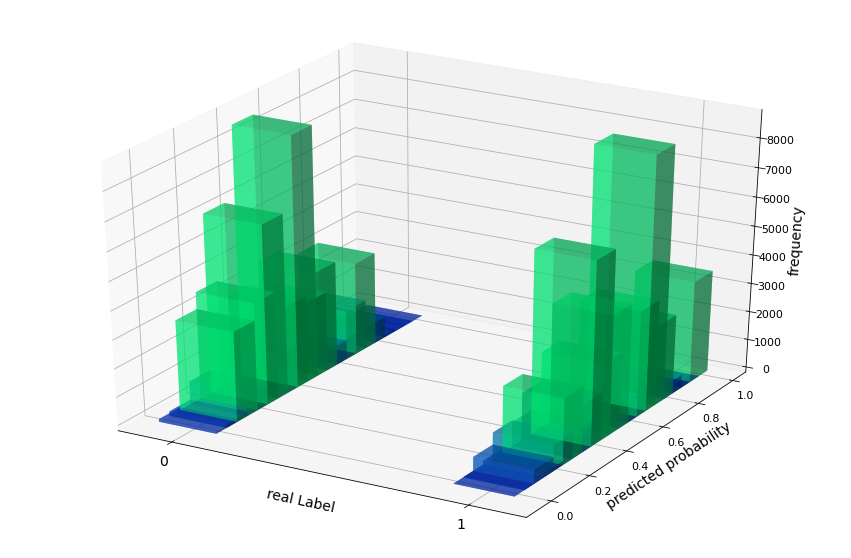}}}
\subfigure[SBA (third 100 k selected group) Test Set]{{\includegraphics[width=5.61cm]{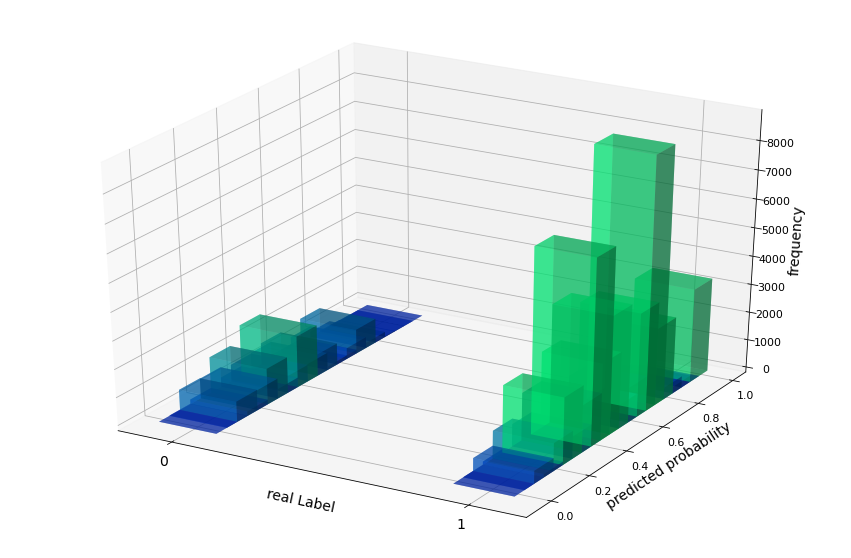} }}%
\caption{The probability of default for individuals German, Australian and Japanese data sets.}
\label{fig:probability_default_3}
\end{figure}


\begin{figure}
\centering
\subfigure[German train data]{{\includegraphics[width=5.40cm]{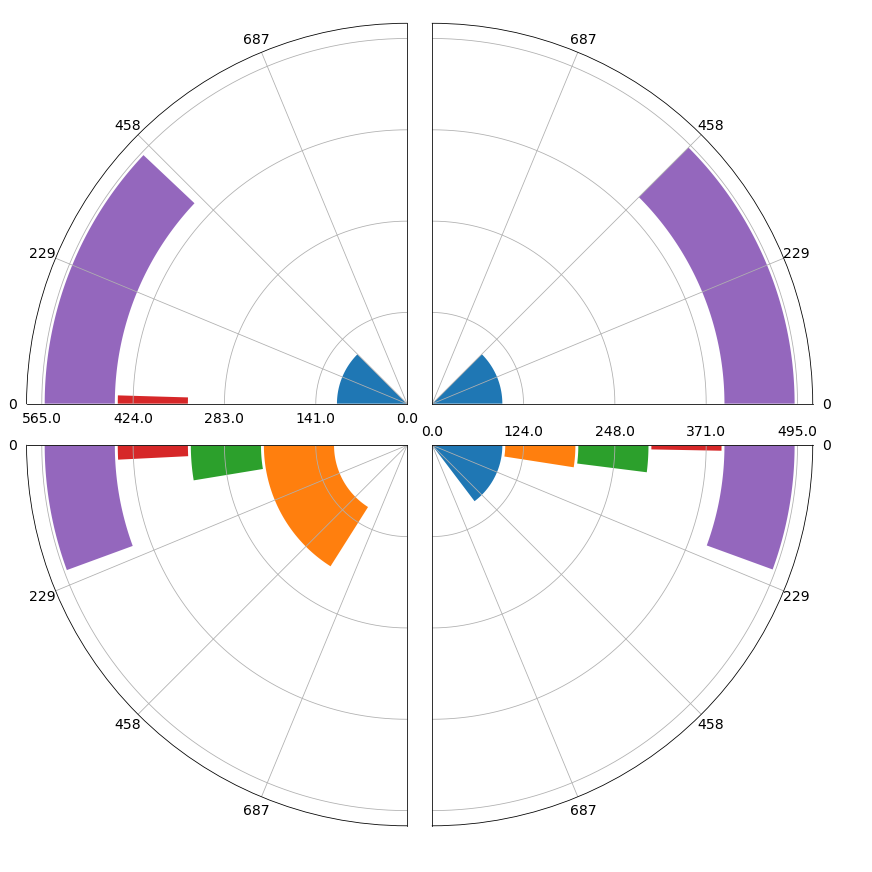}}}
\quad
\subfigure[German test data]{{\includegraphics[width=5.40cm]{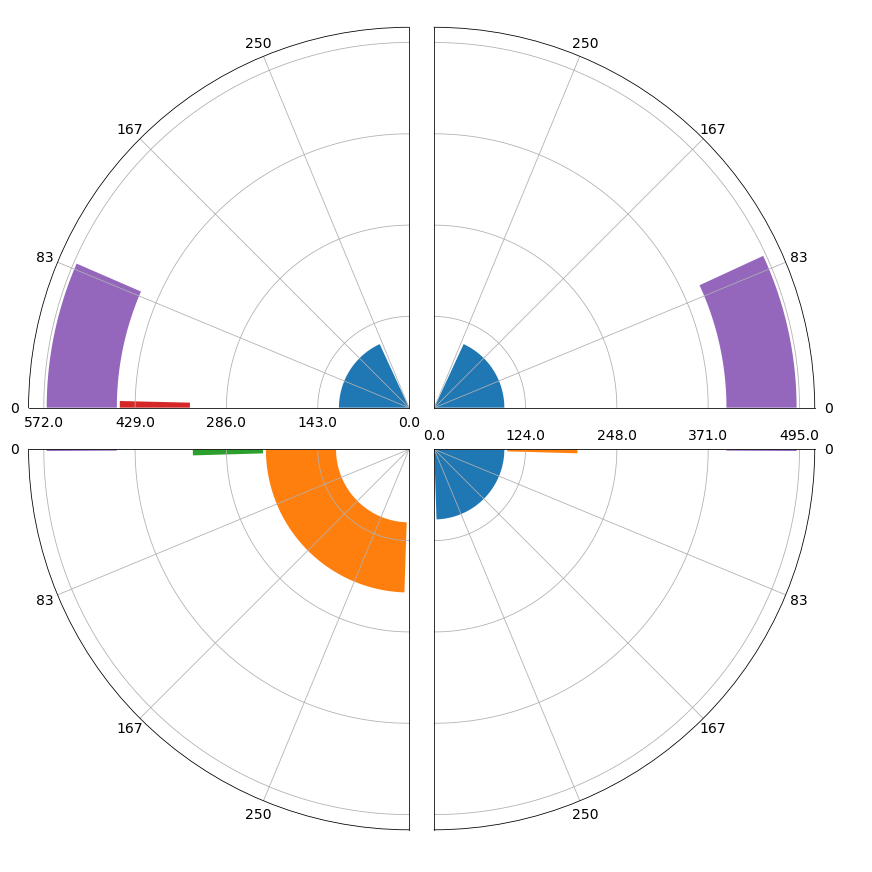} }}\\
\subfigure[Japanese train data]{{\includegraphics[width=5.40cm]{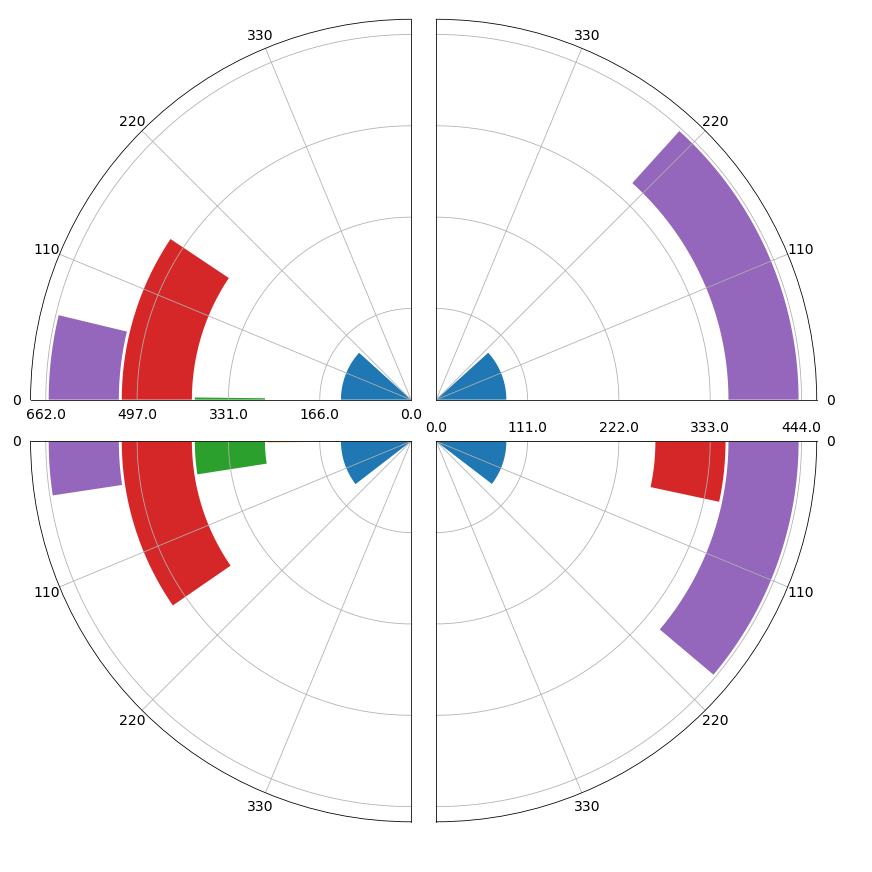}}}
\quad
\subfigure[Japanese test data]{{\includegraphics[width=5.40cm]{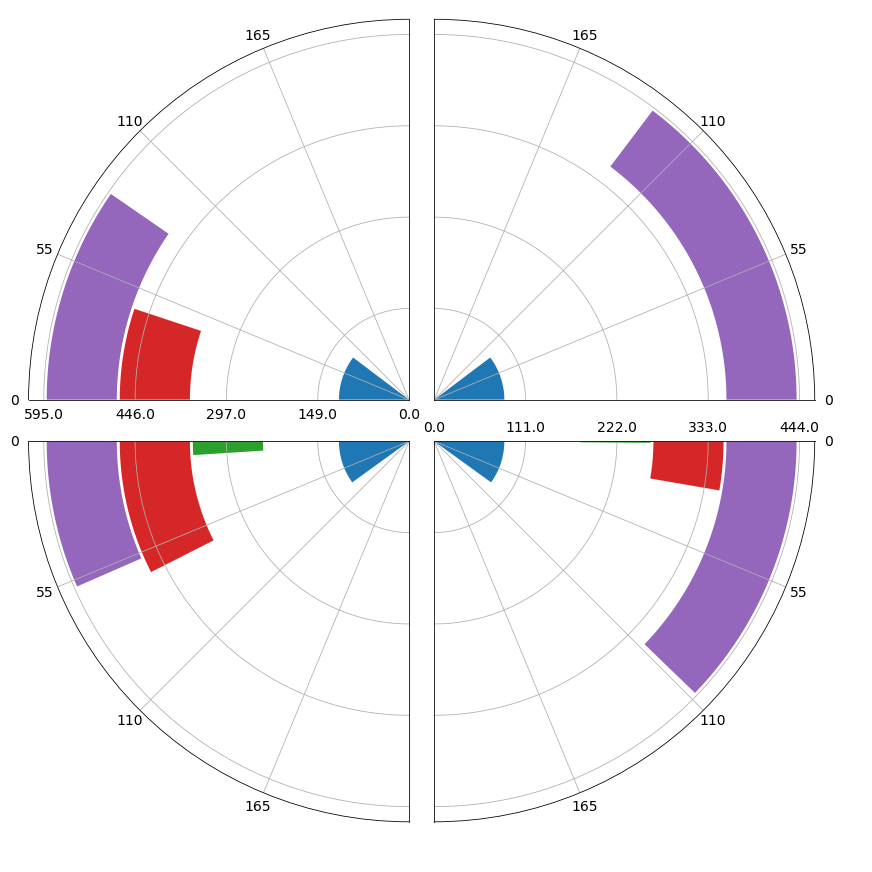} }}\\
\subfigure[Australian  train data]{{\includegraphics[width=5.40cm]{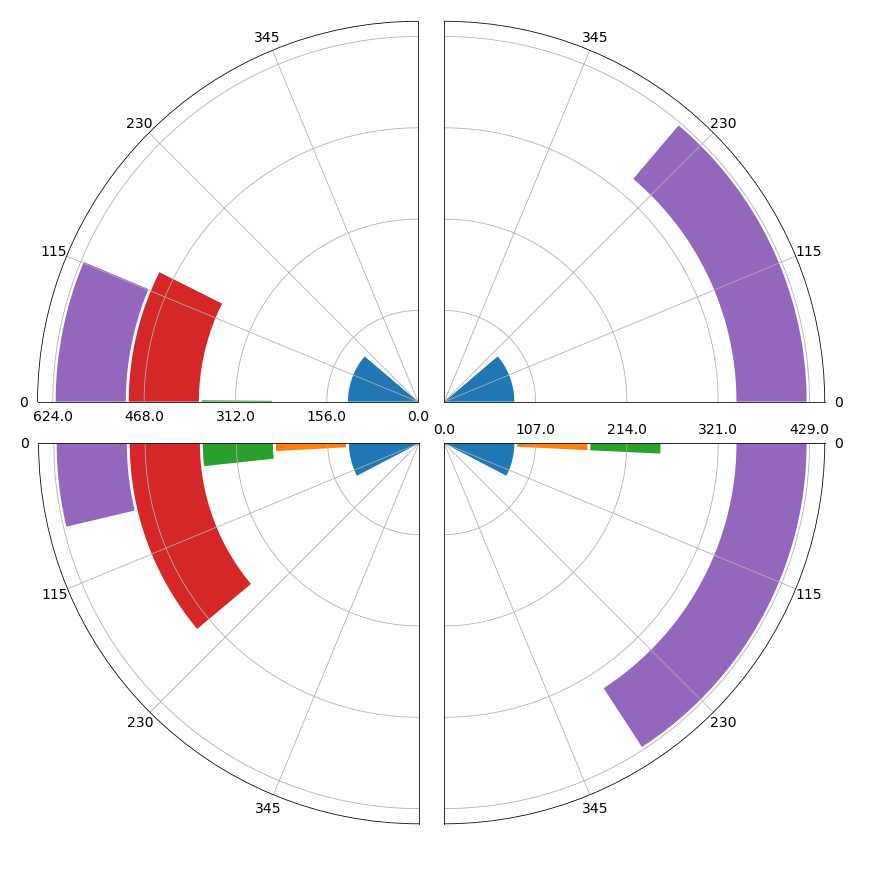}}}
\quad
\subfigure[Australian test data]{{\includegraphics[width=5.40cm]{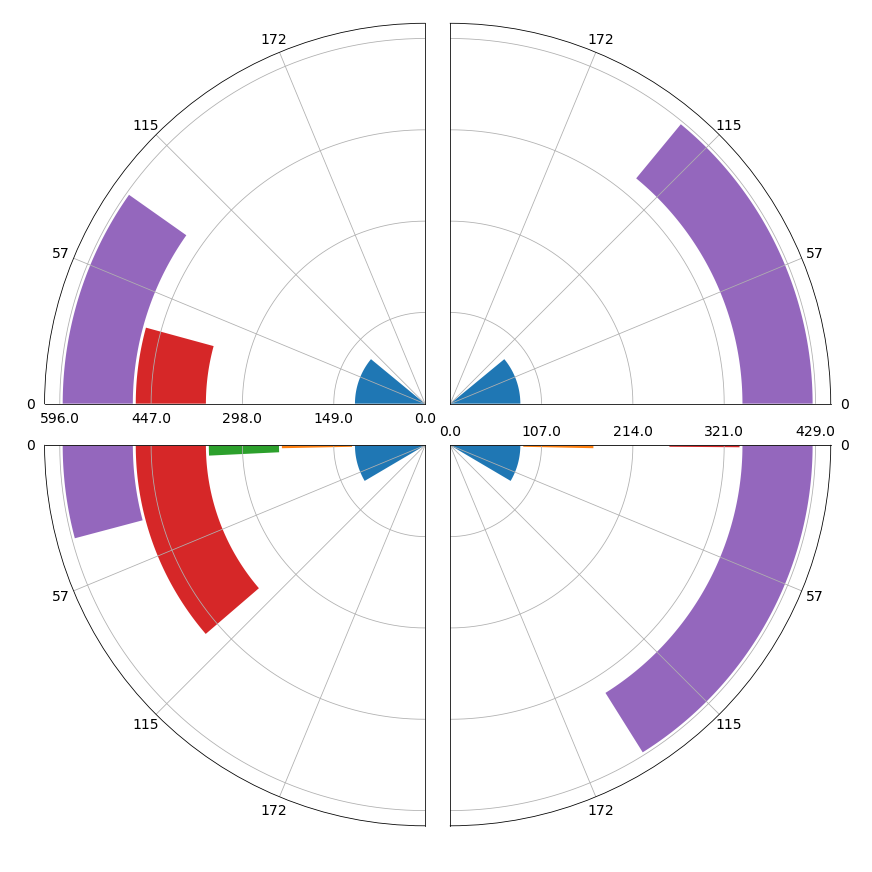} }}\\
\caption{The amount of actual loss versus predicted loss.}
\label{fig:fixed&random_loan}%
\end{figure}

 \begin{figure}
\centering
\subfigure[Telemarketing train data]{{\includegraphics[width=5.40cm]{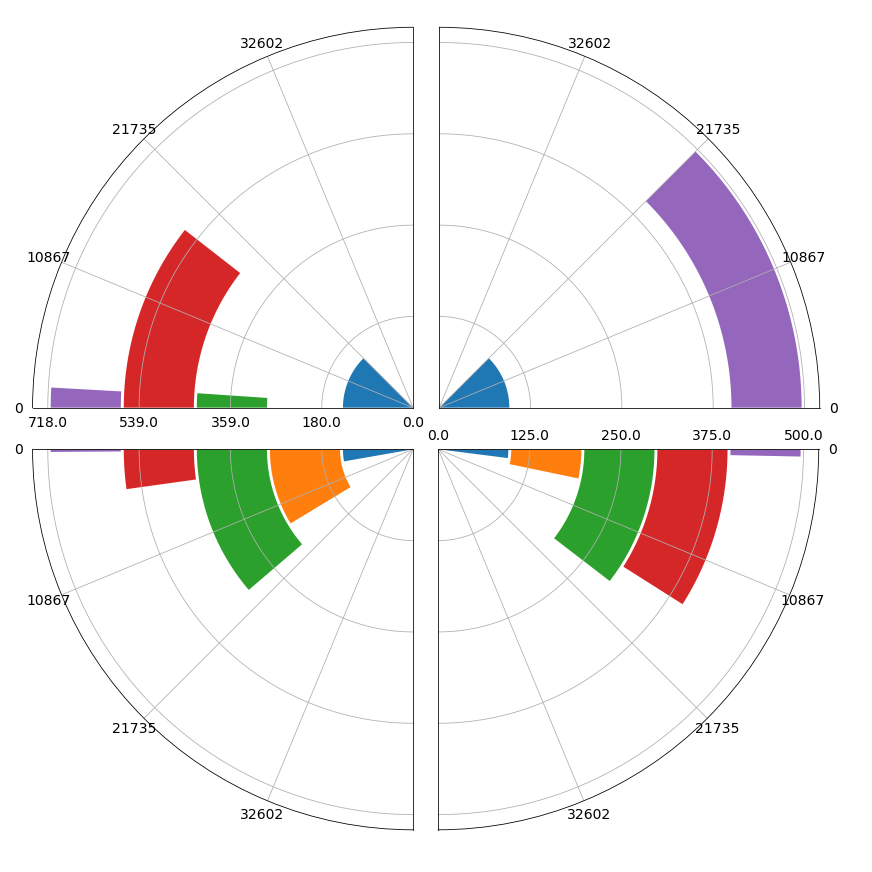}}}
\quad
\subfigure[Telemarketing test data]{{\includegraphics[width=5.40cm]{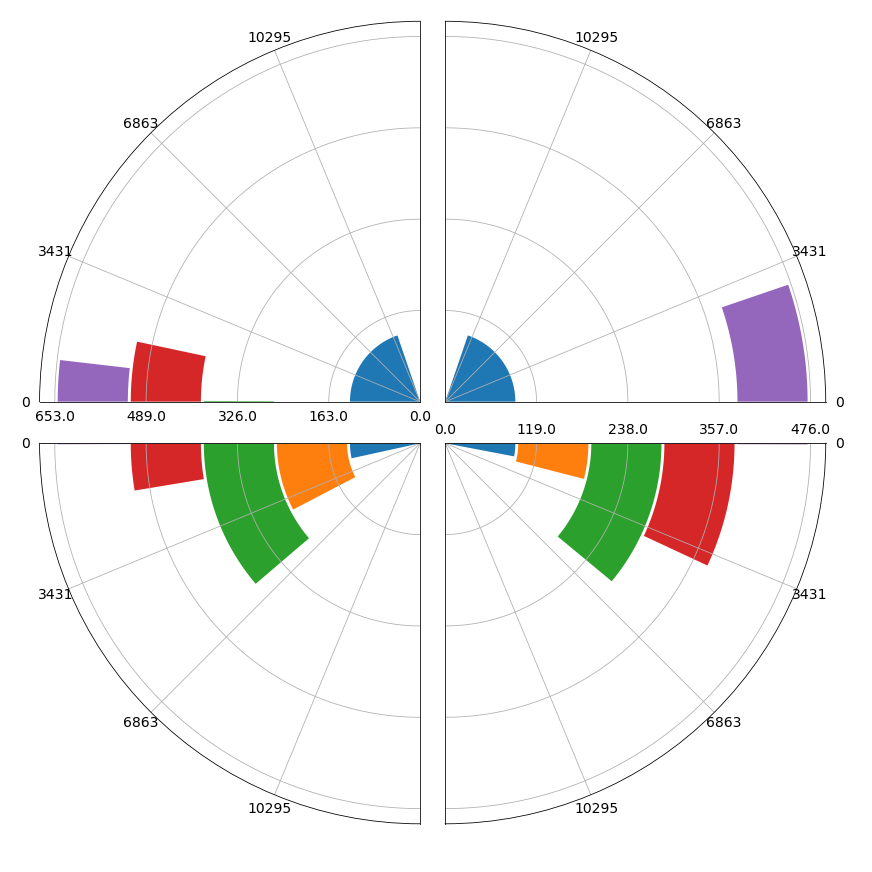} }}\\
\subfigure[SBA train data]{{\includegraphics[width=5.40cm]{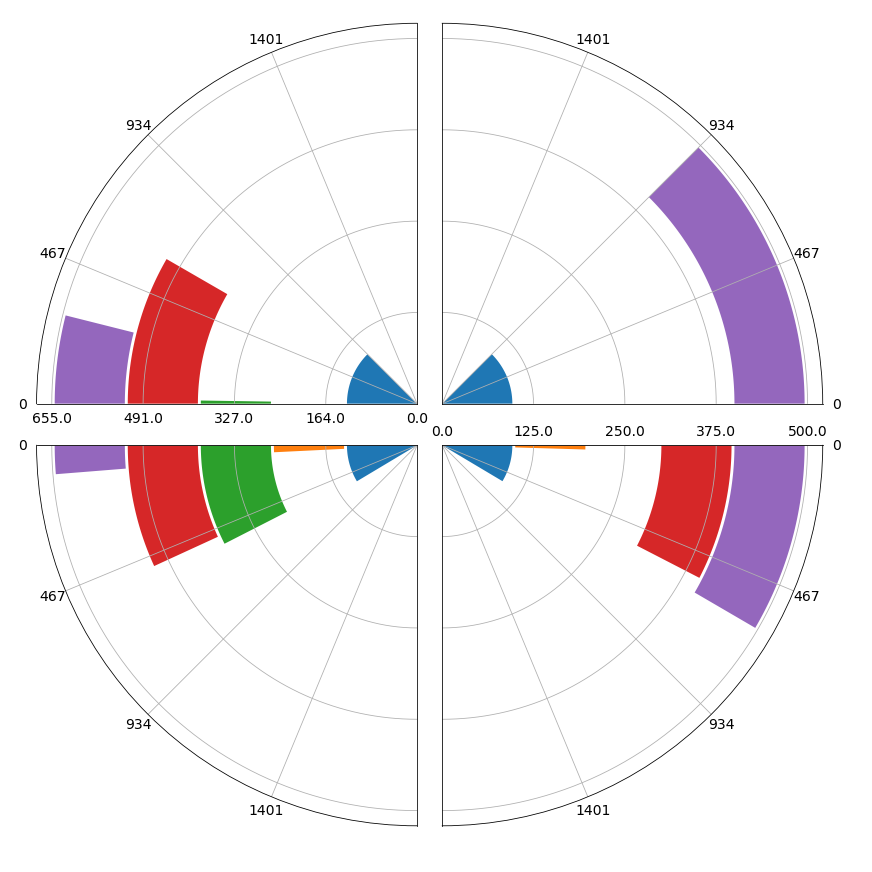}}}
\quad
\subfigure[SBA test data]{{\includegraphics[width=5.40cm]{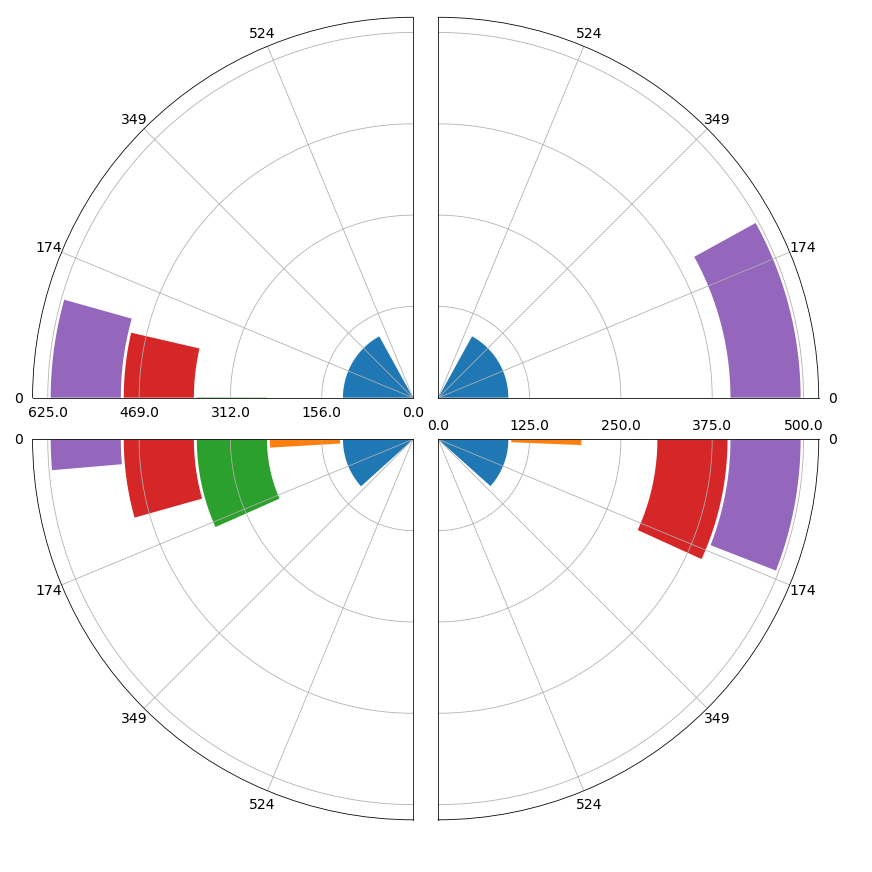} }}\\
\subfigure[Taiwanese  train data]{{\includegraphics[width=5.40cm]{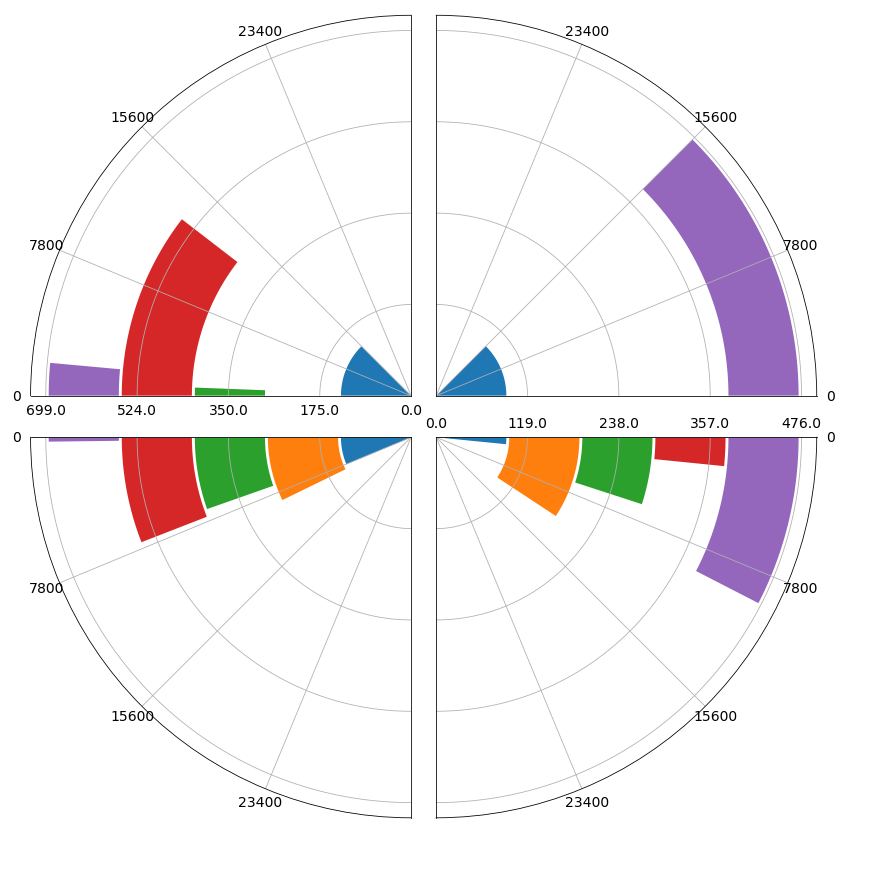}}}
\quad
\subfigure[Taiwanese test data]{{\includegraphics[width=5.40cm]{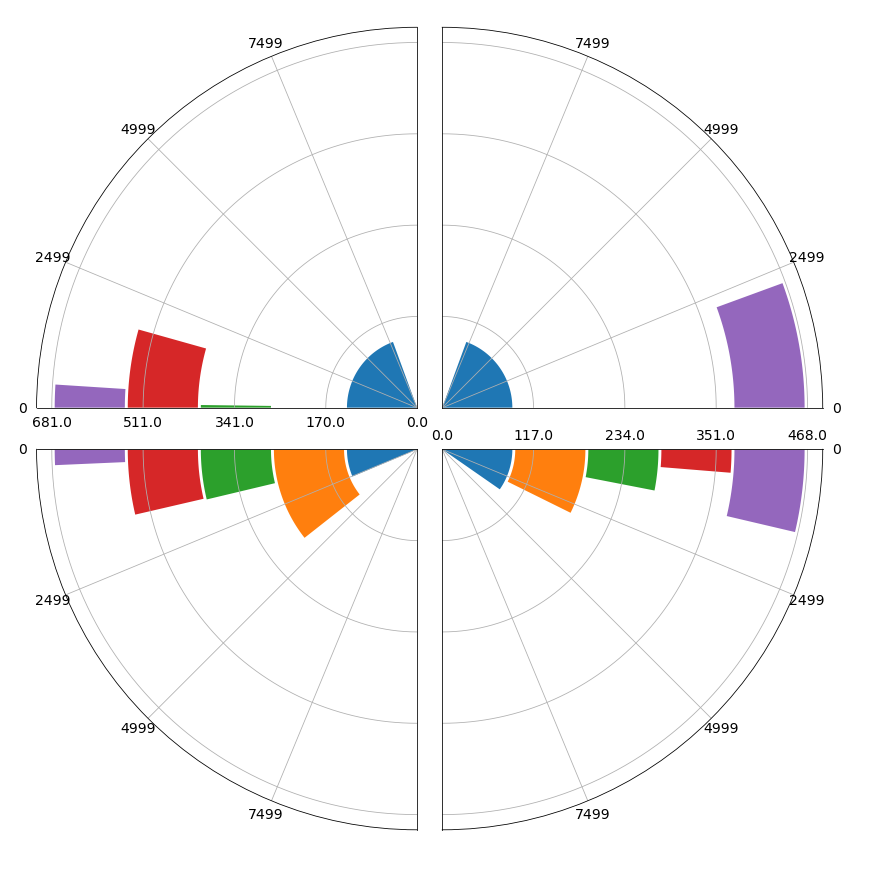} }}\\
\caption{The amount of actual loss versus predicted loss.}
\label{fig:fixed&random_loan}%
\end{figure}

 \begin{figure}
\centering
\subfigure[SBAk1 train data]{{\includegraphics[width=5.40cm]{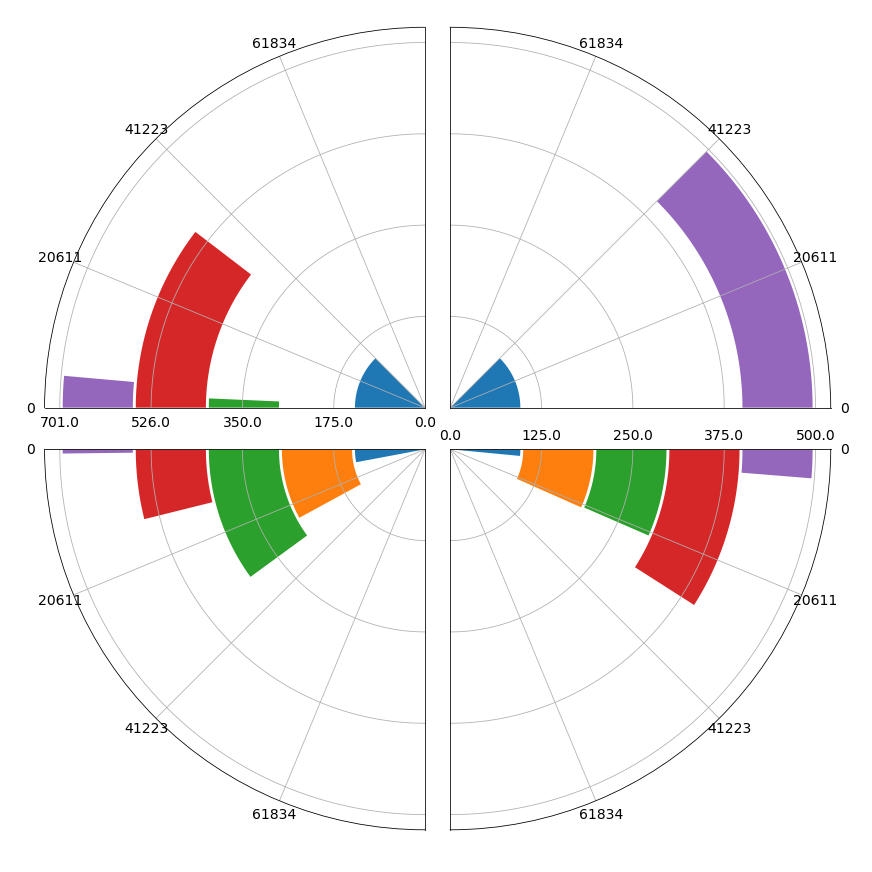}}}
\quad
\subfigure[SBAk1 test data]{{\includegraphics[width=5.40cm]{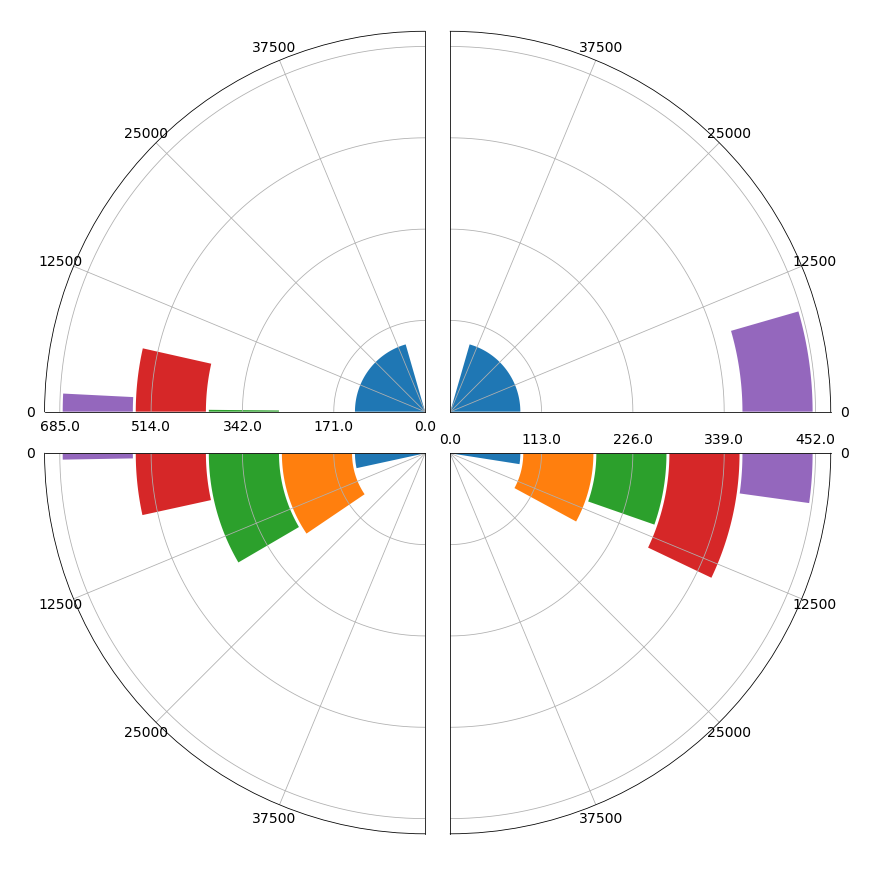} }}\\
\subfigure[SBAk2 train data]{{\includegraphics[width=5.40cm]{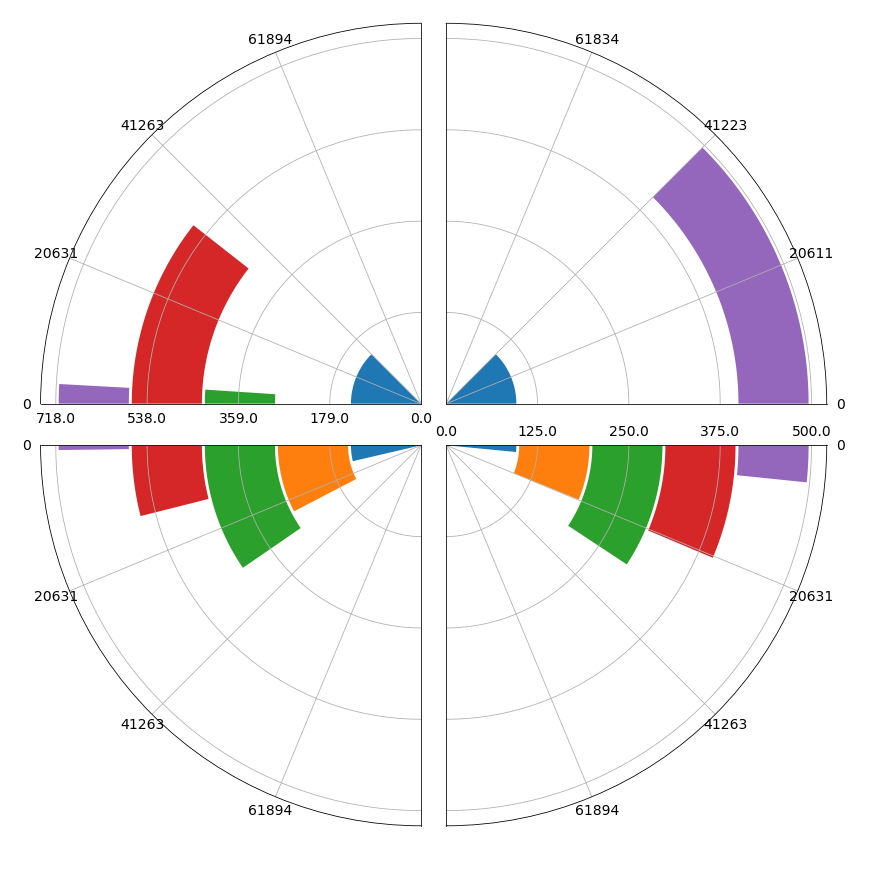}}}
\quad
\subfigure[SBAk2 test data]{{\includegraphics[width=5.40cm]{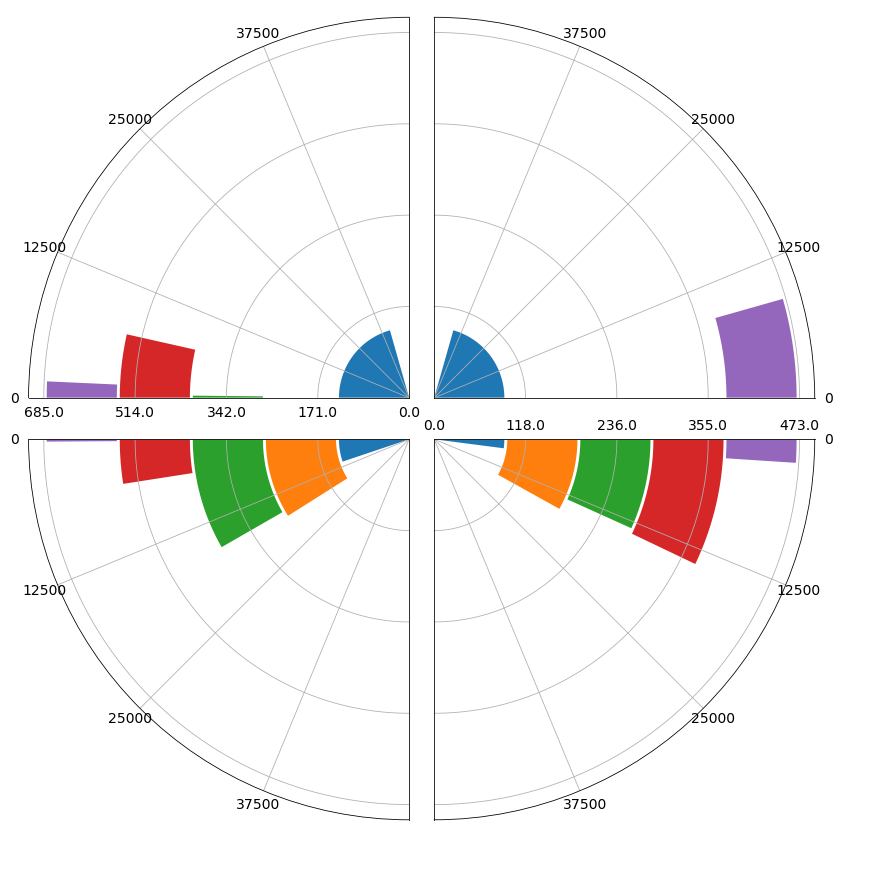} }}\\
\subfigure[SBAk2  train data]{{\includegraphics[width=5.40cm]{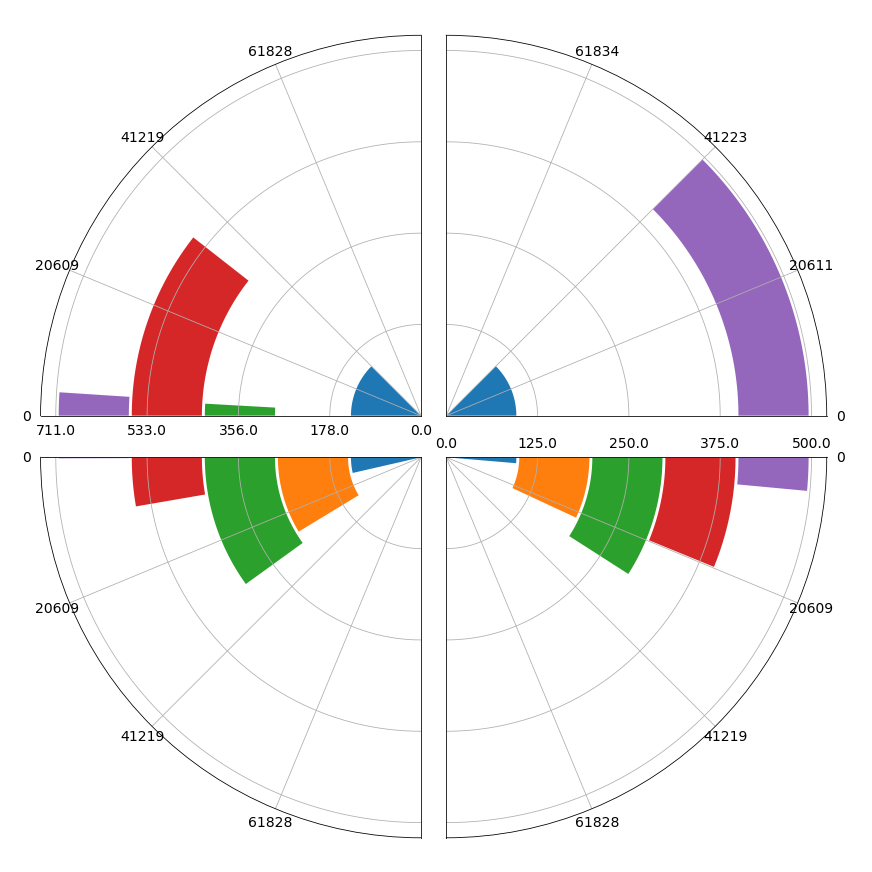}}}
\quad
\subfigure[SBAk3 test data]{{\includegraphics[width=5.40cm]{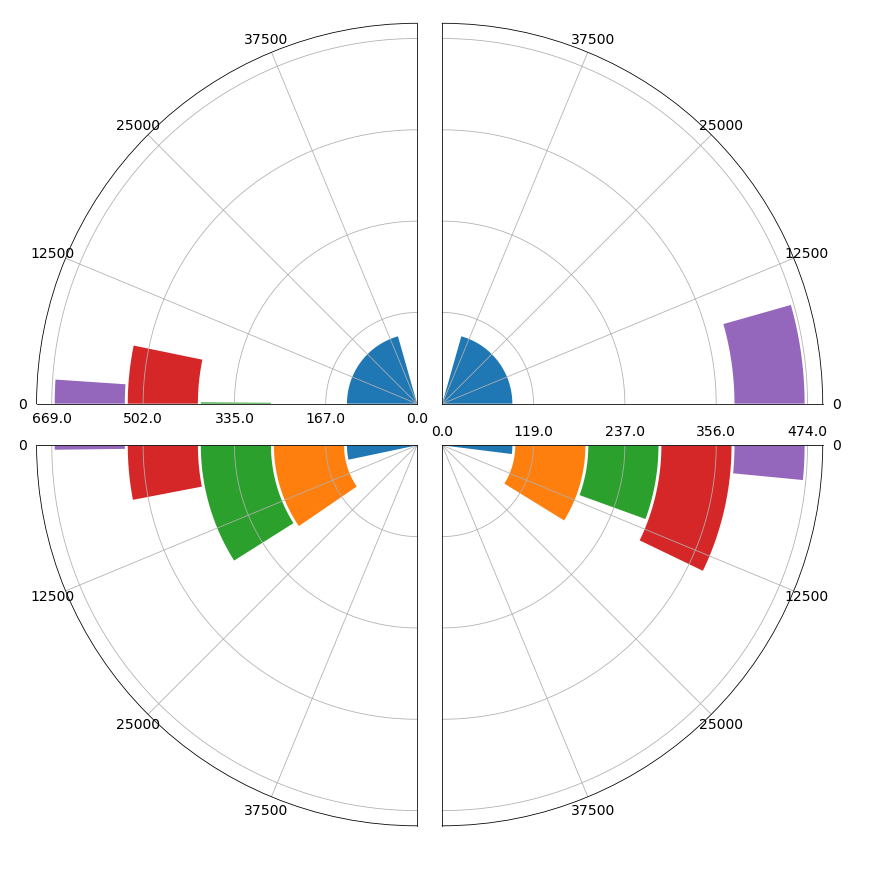} }}\\
\caption{The amount of actual loss versus predicted loss.}
\label{fig:fixed&random_loan}%
\end{figure}

\begin{table}
  \centering
\caption{Actual and predicted loss for different models and their errors}
\resizebox{\textwidth}{!}{
    \begin{tabular}{lrrrrrrrr}
    \toprule
    \boldmath{}\textbf{$M = 1000$ fixed (Train Set)}\unboldmath{} &       &       &       &       &       &       &       &  \\
          & \multicolumn{1}{c}{\textbf{Total}} & \multicolumn{1}{p{6.07em}}{\textbf{Actual}} & \multicolumn{1}{p{6.5em}}{\textbf{Predicted Loss}} & \multicolumn{1}{c}{\textbf{Relative}} & \multicolumn{1}{p{6.5em}}{\textbf{Predicted Loss}} & \multicolumn{1}{c}{\textbf{Relative}} & \multicolumn{1}{p{7.93em}}{\textbf{Predicted Loss}} & \multicolumn{1}{c}{\textbf{Relative}} \\
          & \multicolumn{1}{c}{\textbf{Capital}} & \multicolumn{1}{c}{\textbf{Loss}} & \multicolumn{1}{c}{\textbf{(GMM)}} & \multicolumn{1}{c}{\textbf{Error}} & \multicolumn{1}{c}{\textbf{(SVM)}} & \multicolumn{1}{c}{\textbf{Error}} & \multicolumn{1}{c}{\textbf{(LR)}} & \multicolumn{1}{c}{\textbf{Error}} \\
    \midrule
    \textbf{Germany} & 916,000 & 229,000 & 229,000 & 0.00\% & 228,252 & 0.08\% & 226,144 & 0.31\% \\
    \textbf{Japan} & 440,000 & 117,000 & 117,000 & 0.00\% & 115,716 & 0.29\% & 112,961 & 0.92\% \\
    \textbf{Australia} & 460,000 & 127,000 & 127,000 & 0.00\% & 125,412 & 0.35\% & 129,779 & 0.60\% \\
    \textbf{Taiwan} & 31,200,000 & 7,800,000 & 7,800,000 & 0.00\% & 7,817,018 & 0.05\% & 7,805,297 & 0.02\% \\
    \textbf{SBA} & 1,868,000 & 467,000 & 467,000 & 0.00\% & 458,198 & 0.47\% & 467,000 & 0.00\% \\
    \textbf{Telemarketing} & 43,470,000 & 10,867,500 & 10,867,500 & 0.00\% & 10,822,793 & 0.10\% & 10,869,983 & 0.01\% \\
    \textbf{SBA 50k1} & 82,446,000 & 20,611,500 & 20,611,500 & 0.00\% & 16,390,771 & 5.12\% & 20,618,452 & 0.01\% \\
    \textbf{SBA 50k2} & 82,526,000 & 20,631,500 & 20,631,500 & 0.00\% & 12,538,894 & 9.81\% & 20,637,919 & 0.01\% \\
    \textbf{SBA 50k3} & 82,438,000 & 20,609,500 & 20,609,500 & 0.00\% & 15,174,912 & 6.59\% & 20,615,872 & 0.01\% \\
          &       &       &       &       &       &       &       &  \\
          \midrule
    \boldmath{}\textbf{$M = 1000$ fixed (Train Set)}\unboldmath{} &       &       &       &       &       &       &       &  \\
          & \multicolumn{1}{c}{\textbf{Total}} & \multicolumn{1}{p{6.07em}}{\textbf{Actual}} & \multicolumn{1}{p{6.5em}}{\textbf{Predicted Loss}} & \multicolumn{1}{c}{\textbf{Relative}} & \multicolumn{1}{p{6.5em}}{\textbf{Predicted Loss}} & \multicolumn{1}{c}{\textbf{Relative}} & \multicolumn{1}{p{7.93em}}{\textbf{Predicted Loss}} & \multicolumn{1}{c}{\textbf{Relative}} \\
          & \multicolumn{1}{c}{\textbf{Capital}} & \multicolumn{1}{c}{\textbf{Loss}} & \multicolumn{1}{c}{\textbf{(GMM)}} & \multicolumn{1}{c}{\textbf{Error}} & \multicolumn{1}{c}{\textbf{(SVM)}} & \multicolumn{1}{c}{\textbf{Error}} & \multicolumn{1}{c}{\textbf{(LR)}} & \multicolumn{1}{c}{\textbf{Error}} \\
    \midrule
    \textbf{Germany} & 334,000 & 46,000 & 49,144 & 0.94\% & 83,527 & 11.24\% & 55,510 & 2.85\% \\
    \textbf{Japan} & 221,000 & 65,000 & 61,346 & 1.65\% & 57,749 & 3.28\% & 57,762 & 3.28\% \\
    \textbf{Australia} & 230,000 & 64,500 & 63,364 & 0.49\% & 62,967 & 0.67\% & 66,799 & 1.00\% \\
    \textbf{Taiwan} & 9,999,000 & 1,117,500 & 2,008,574 & 8.91\% & 2,345,955 & 12.29\% & 2,378,428 & 12.61\% \\
    \textbf{SBA} & 699,000 & 112,000 & 135,868 & 3.41\% & 122,963 & 1.57\% & 111,860 & 0.02\% \\
    \textbf{Telemarketing} & 13,727,000 & 1,437,500 & 3,103,969 & 12.14\% & 3,296,921 & 13.55\% & 3,122,712 & 12.28\% \\
    \textbf{SBA 50k1} & 50,000,000 & 4,466,500 & 10,868,920 & 12.80\% & 10,013,849 & 11.09\% & 10,332,796 & 11.73\% \\
    \textbf{SBA 50k2} & 50,000,000 & 4,400,000 & 10,847,613 & 12.90\% & 7,649,808 & 6.50\% & 10,528,968 & 12.26\% \\
    \textbf{SBA 50k3} & 50,000,000 & 4,397,000 & 11,006,260 & 13.22\% & 9,285,423 & 9.78\% & 10,434,015 & 12.07\% \\
          &       &       &       &       &       &       &       &  \\
          \midrule
    \boldmath{}\textbf{$M \sim \mathcal{N}(1000, 100)$  (Train Set)}\unboldmath{} &       &       &       &       &       &       &       &  \\
          & \multicolumn{1}{c}{\textbf{Total}} & \multicolumn{1}{p{6.07em}}{\textbf{Actual}} & \multicolumn{1}{p{6.5em}}{\textbf{Predicted Loss}} & \multicolumn{1}{c}{\textbf{Relative}} & \multicolumn{1}{p{6.5em}}{\textbf{Predicted Loss}} & \multicolumn{1}{c}{\textbf{Relative}} & \multicolumn{1}{p{7.93em}}{\textbf{Predicted Loss}} & \multicolumn{1}{c}{\textbf{Relative}} \\
          & \multicolumn{1}{c}{\textbf{Capital}} & \multicolumn{1}{c}{\textbf{Loss}} & \multicolumn{1}{c}{\textbf{(GMM)}} & \multicolumn{1}{c}{\textbf{Error}} & \multicolumn{1}{c}{\textbf{(SVM)}} & \multicolumn{1}{c}{\textbf{Error}} & \multicolumn{1}{c}{\textbf{(LR)}} & \multicolumn{1}{c}{\textbf{Error}} \\
          & \multicolumn{1}{c}{\textbf{(Approixmate)}} &       &       &       &       &       &       &  \\
    \midrule
    \textbf{Germany} & 916,000 & 228,345 & 228,366 & 0.00\% & 227,496 & 0.09\% & 225,407 & 0.32\% \\
    \textbf{Japan} & 440,000 & 117,453 & 117,382 & 0.02\% & 116,099 & 0.31\% & 113,370 & 0.93\% \\
    \textbf{Australia} & 460,000 & 127,285 & 127,314 & 0.01\% & 125,865 & 0.31\% & 130,140 & 0.62\% \\
    \textbf{Taiwan} & 31,200,000 & 7,786,223 & 7,786,399 & 0.00\% & 7,817,018 & 0.10\% & 7,805,297 & 0.06\% \\
    \textbf{SBA} & 1,868,000 & 467,176 & 466,700 & 0.03\% & 458,409 & 0.47\% & 467,322 & 0.01\% \\
    \textbf{Telemarketing} & 43,470,000 & 10,869,859 & 10,869,292 & 0.00\% & 10,822,516 & 0.11\% & 10,870,556 & 0.00\% \\
    \textbf{SBA 50k1} & 82,446,000 & 20,635,076 & 20,625,158 & 0.01\% & 16,395,034 & 5.14\% & 20,627,681 & 0.01\% \\
    \textbf{SBA 50k2} & 82,526,000 & 20,629,775 & 20,628,724 & 0.00\% & 12,539,906 & 9.80\% & 20,633,195 & 0.00\% \\
    \textbf{SBA 50k3} & 82,438,000 & 20,607,390 & 20,602,555 & 0.01\% & 14,967,292 & 6.84\% & 20,252,249 & 0.43\% \\
          &       &       &       &       &       &       &       &  \\
          \midrule
    \boldmath{}\textbf{$M \sim \mathcal{N}(1000, 100)$  (Test Set)}\unboldmath{} &       &       &       &       &       &       &       &  \\
          & \multicolumn{1}{c}{\textbf{Total}} & \multicolumn{1}{p{6.07em}}{\textbf{Actual}} & \multicolumn{1}{p{6.5em}}{\textbf{Predicted Loss}} & \multicolumn{1}{c}{\textbf{Relative}} & \multicolumn{1}{p{6.5em}}{\textbf{Predicted Loss}} & \multicolumn{1}{c}{\textbf{Relative}} & \multicolumn{1}{p{7.93em}}{\textbf{Predicted Loss}} & \multicolumn{1}{c}{\textbf{Relative}} \\
          & \multicolumn{1}{c}{\textbf{Capital}} & \multicolumn{1}{c}{\textbf{Loss}} & \multicolumn{1}{c}{\textbf{(GMM)}} & \multicolumn{1}{c}{\textbf{Error}} & \multicolumn{1}{c}{\textbf{(SVM)}} & \multicolumn{1}{c}{\textbf{Error}} & \multicolumn{1}{c}{\textbf{(LR)}} & \multicolumn{1}{c}{\textbf{Error}} \\
          & \multicolumn{1}{c}{\textbf{(Approixmate)}} &       &       &       &       &       &       &  \\
    \midrule
    \textbf{Germany} & 334,000 & 46,214 & 49,084 & 0.86\% & 83,452 & 11.15\% & 55,629 & 2.82\% \\
    \textbf{Japan} & 221,000 & 65,086 & 61,994 & 1.40\% & 58,370 & 3.04\% & 58,217 & 3.11\% \\
    \textbf{Australia} & 230,000 & 64,196 & 63,187 & 0.44\% & 62,669 & 0.66\% & 66,643 & 1.06\% \\
    \textbf{Taiwan} & 9,999,000 & 1,119,712 & 2,011,281 & 8.92\% & 2,348,869 & 12.29\% & 2,380,581 & 12.61\% \\
    \textbf{SBA} & 699,000 & 112,197 & 135,952 & 3.40\% & 123,251 & 1.58\% & 112,040 & 0.02\% \\
    \textbf{Telemarketing} & 13,727,000 & 59,779 & 154,197 & 0.69\% & 3,302,660 & 23.62\% & 3,127,802 & 22.35\% \\
    \textbf{SBA 50k1} & 50,000,000 & 4,464,741 & 10,869,494 & 12.81\% & 10,011,965 & 11.09\% & 10,330,776 & 11.73\% \\
    \textbf{SBA 50k2} & 50,000,000 & 4,403,849 & 10,851,358 & 12.90\% & 7,656,332 & 6.50\% & 10,537,883 & 12.27\% \\
    \textbf{SBA 50k3} & 50,000,000 & 4,392,787 & 10,997,210 & 13.21\% & 9,280,042 & 9.77\% & 10,424,647 & 12.06\% \\
    \bottomrule
    \end{tabular}
    }
  \label{tab_prob_loss}%
\end{table}%

\begin{figure}
\subfigure[Australia]{{\includegraphics[width=0.31\linewidth]{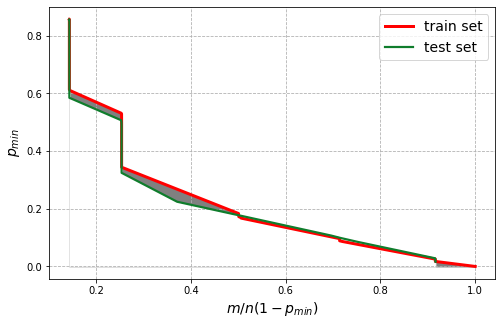}}}
\subfigure[Japan]{{\includegraphics[width=0.31\linewidth]{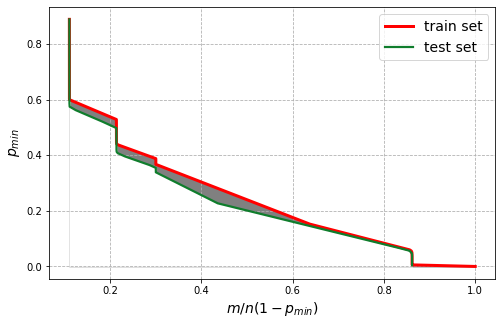} }}
\subfigure[Germany]{{\includegraphics[width=0.31\linewidth]{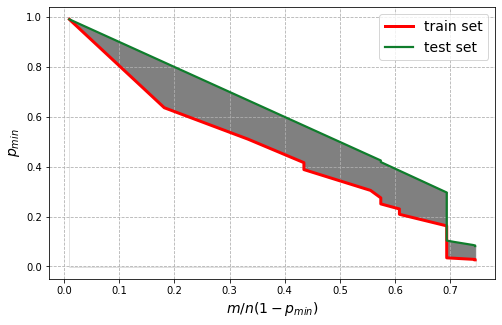}}} \\
\subfigure[Telemarketing]{{\includegraphics[width=0.31\linewidth]{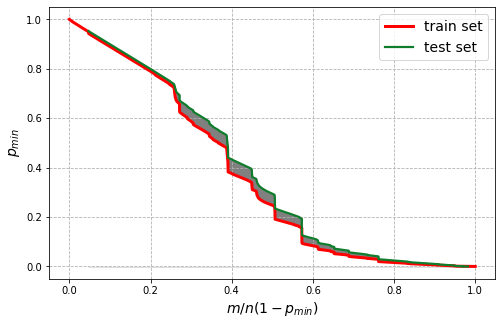}}}
\subfigure[Taiwan]{{\includegraphics[width=0.31\linewidth]{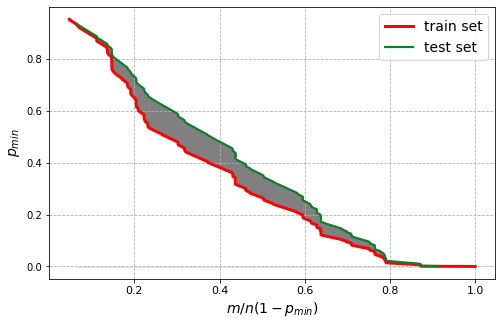} }}
\subfigure[SBA-case]{{\includegraphics[width=0.31\linewidth]{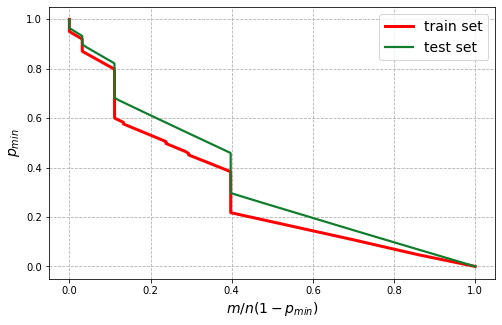}}} \\
\subfigure[SBA (first 100 k)]{{\includegraphics[width=0.31\linewidth]{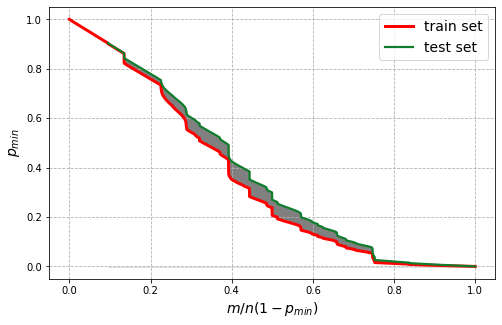}}}
\subfigure[SBA (second 100 k)]{{\includegraphics[width=0.31\linewidth]{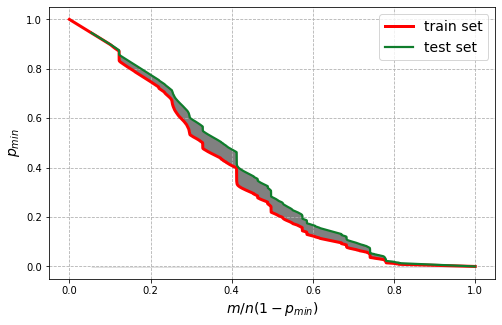} }}
\subfigure[SBA (third 100 k)]{{\includegraphics[width=0.31\linewidth]{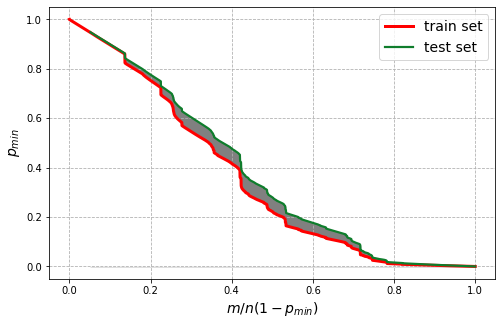}}} \\
\caption{The relationship between decision threshold and $\frac{m}{n}(1-p_{\text{min}})$ }
\label{fig:mn}%
\end{figure}

\begin{figure}
\subfigure[Australia]{{\includegraphics[width=0.31\linewidth]{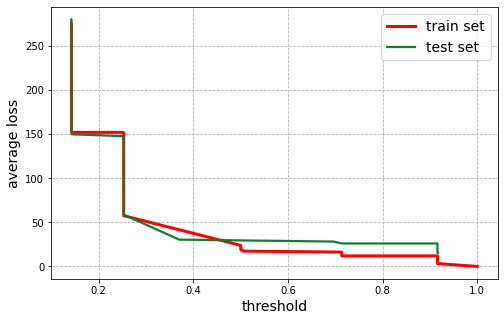}}}
\subfigure[Japan]{{\includegraphics[width=0.31\linewidth]{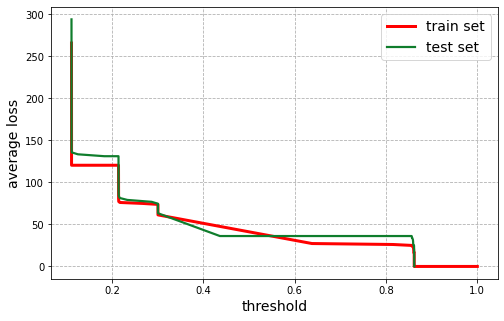} }}
\subfigure[Germany]{{\includegraphics[width=0.31\linewidth]{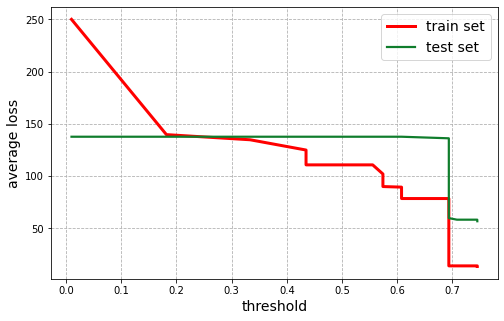}}} \\
\subfigure[Telemarketing]{{\includegraphics[width=0.31\linewidth]{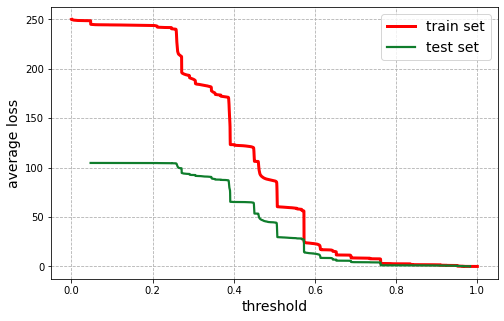}}}
\subfigure[Japan]{{\includegraphics[width=0.31\linewidth]{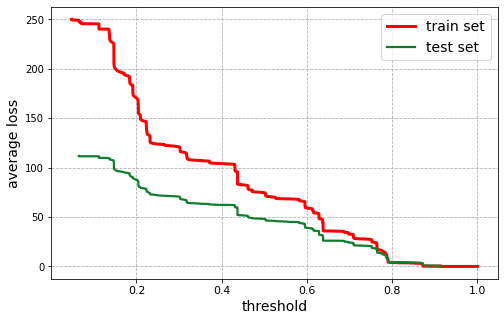} }}
\subfigure[Germany]{{\includegraphics[width=0.31\linewidth]{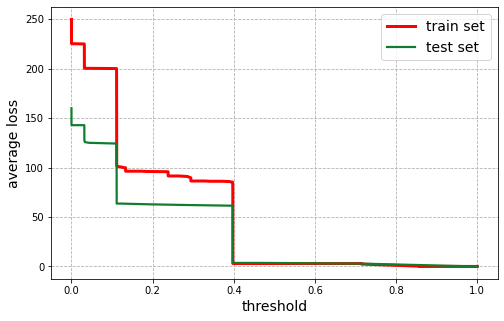}}}\\
\subfigure[SBA (first 100 k selected)]{{\includegraphics[width=0.31\linewidth]{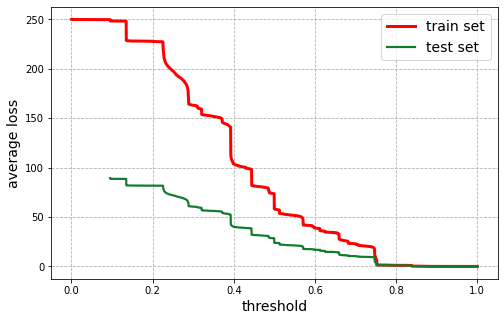}}}
\subfigure[SBA (second 100 k selected)]{{\includegraphics[width=0.31\linewidth]{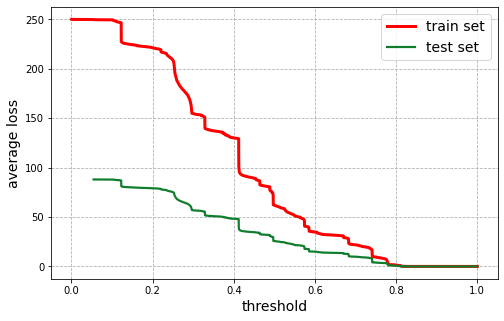} }}
\subfigure[SBA (third 100 k selected)]{{\includegraphics[width=0.31\linewidth]{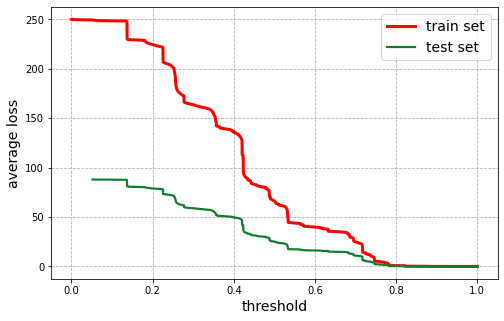}}}
\caption{The relationship between loss and $p_{\text{min}}$ } 

\label{fig:loss_fix}%
\end{figure}

\begin{figure}
\subfigure[Australia]{{\includegraphics[width=0.31\linewidth]{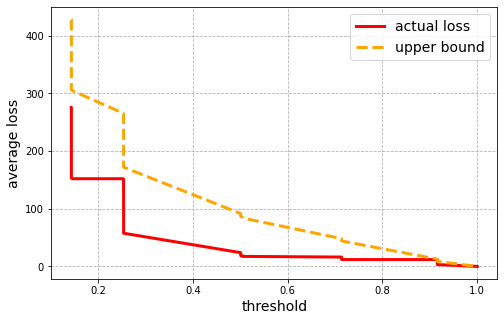}}}
\subfigure[Japan]{{\includegraphics[width=0.31\linewidth]{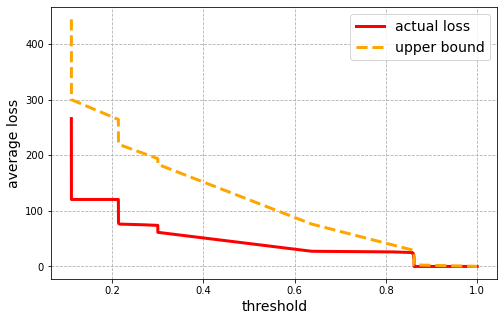} }}
\subfigure[Germany]{{\includegraphics[width=0.31\linewidth]{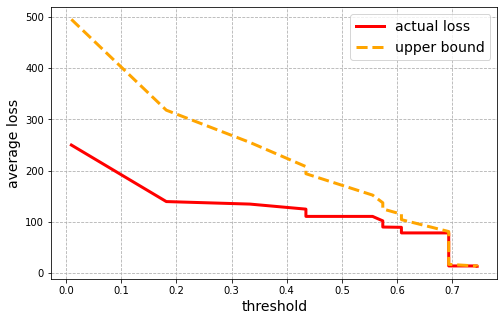}}}\\
\subfigure[Australia]{{\includegraphics[width=0.31\linewidth]{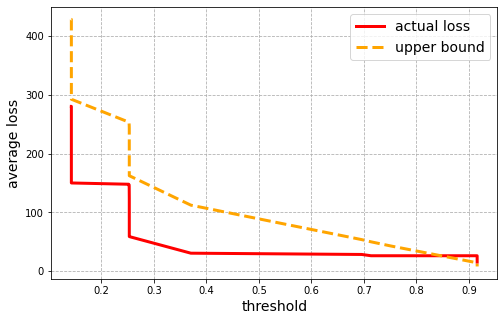}}}
\subfigure[Japan]{{\includegraphics[width=0.31\linewidth]{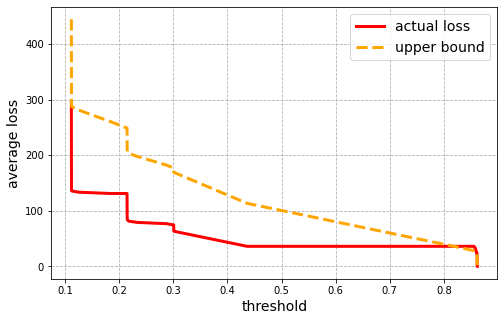} }}
\subfigure[Germany]{{\includegraphics[width=0.31\linewidth]{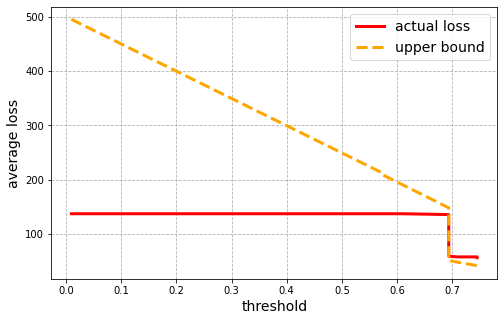}}}
\caption{The upper bound for the loss function using different decision thresholds (panels a, b and c represent the train set, while panels d, e and f show the test set results)}
\label{fig:upper_bound}%
\end{figure}

\begin{figure}
\subfigure[Telemarketing]{{\includegraphics[width=0.31\linewidth]{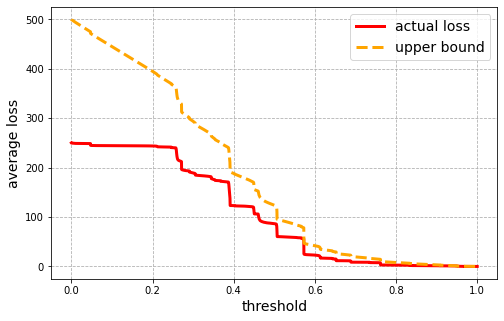}}}
\subfigure[Taiwan]{{\includegraphics[width=0.31\linewidth]{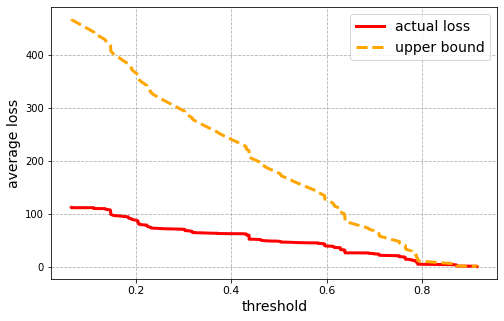} }}
\subfigure[SBA-case]{{\includegraphics[width=0.31\linewidth]{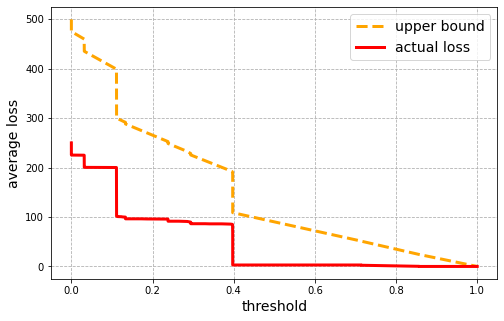}}}\\
\subfigure[Telemarketing]{{\includegraphics[width=0.31\linewidth]{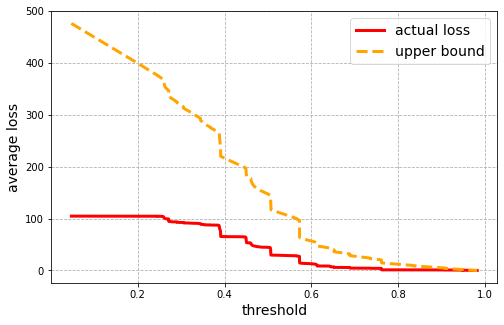}}}
\subfigure[Taiwan]{{\includegraphics[width=0.31\linewidth]{figs/chp3/taiwan_upper_test.png} }}
\subfigure[SBA-case]{{\includegraphics[width=0.31\linewidth]{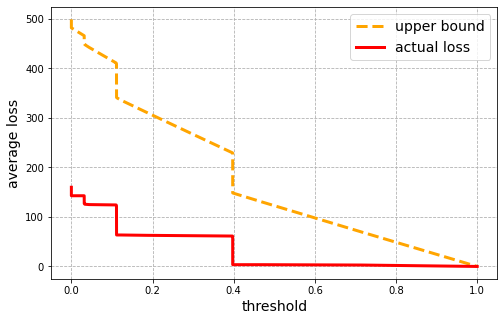}}}
\caption{The upper bound for the loss function using different decision thresholds (panels a, b and c represent the train set, while panels d, e and f show the test set results)}
\label{fig:upper_bound_2}%
\end{figure}

\begin{figure}
\subfigure[SBA 1]{{\includegraphics[width=0.31\linewidth]{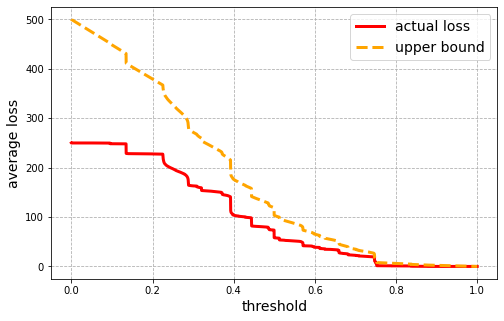}}}
\subfigure[SBA 2]{{\includegraphics[width=0.31\linewidth]{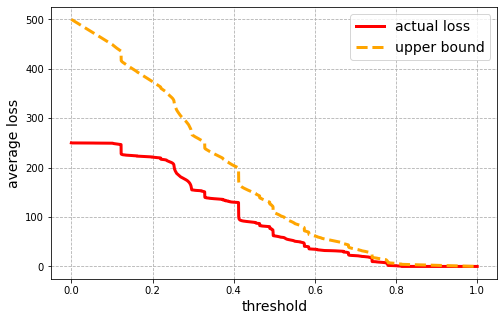} }}
\subfigure[SBA 3]{{\includegraphics[width=0.31\linewidth]{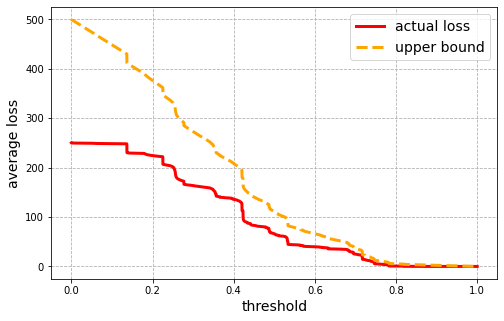}}}\\
\subfigure[SBA 1]{{\includegraphics[width=0.31\linewidth]{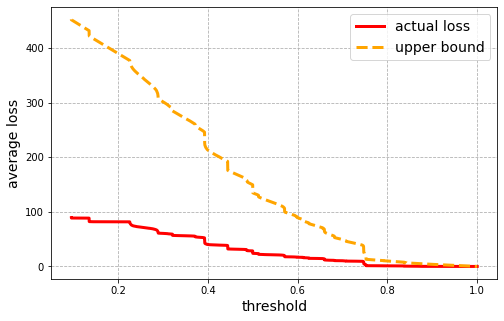}}}
\subfigure[SBA 2]{{\includegraphics[width=0.31\linewidth]{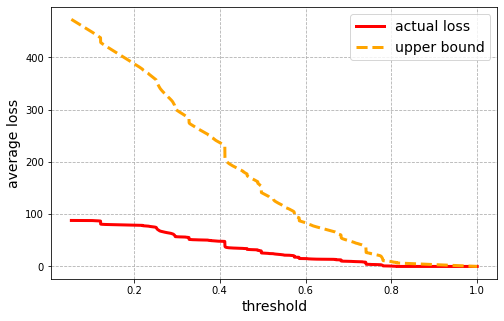} }}
\subfigure[SBA 3]{{\includegraphics[width=0.31\linewidth]{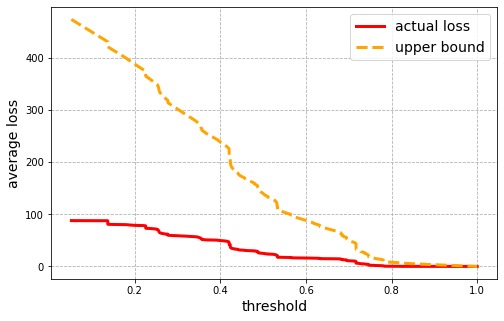}}}
\caption{The upper bound for the loss function using different decision thresholds (panels a, b and c represent the train set, while panels d, e and f show the test set results)}
\label{fig:upper_bound_3}%
\end{figure}

\begin{table}
\centering
\caption{Comparison of GMM's performance with SVM and LR through different accuracy measurements}
\resizebox{\textwidth}{!}{
\begin{tabular}{lrrrrrrrrrrrrrrr}
\toprule
\multicolumn{1}{c}{\multirow{2}[2]{*}{\textbf{Train}}} & \multicolumn{3}{c}{Accuracy} & \multicolumn{3}{c}{Precision score} & \multicolumn{3}{c}{recall score} & \multicolumn{3}{c}{F1 score} & \multicolumn{3}{c}{ROC AUC score} \\
      & \multicolumn{1}{c}{GMM} & \multicolumn{1}{c}{SVM} & \multicolumn{1}{c}{LR} & \multicolumn{1}{c}{GMM} & \multicolumn{1}{c}{SVM} & \multicolumn{1}{c}{LR} & \multicolumn{1}{c}{GMM} & \multicolumn{1}{c}{SVM} & \multicolumn{1}{c}{LR} & \multicolumn{1}{c}{GMM} & \multicolumn{1}{c}{SVM} & \multicolumn{1}{c}{LR} & \multicolumn{1}{c}{GMM} & \multicolumn{1}{c}{SVM} & \multicolumn{1}{c}{LR} \\
\midrule
Germany & 0.74  & 0.55  & 0.82  & 0.68  & 0.53  & 0.82  & 0.93  & 0.83  & 0.82  & 0.78  & 0.65  & 0.82  & 0.74  & 0.56  & 0.89 \\
Japan & 0.84  & 0.66  & 0.83  & 0.87  & 0.82  & 0.82  & 0.78  & 0.34  & 0.80  & 0.82  & 0.48  & 0.81  & 0.84  & 0.69  & 0.90 \\
Australia & 0.82  & 0.67  & 0.85  & 0.87  & 0.80  & 0.87  & 0.71  & 0.34  & 0.77  & 0.78  & 0.48  & 0.82  & 0.81  & 0.70  & 0.92 \\
Taiwan & 0.73  & 0.63  & 0.62  & 0.72  & 0.69  & 0.66  & 0.76  & 0.49  & 0.50  & 0.74  & 0.57  & 0.57  & 0.79  & 0.69  & 0.67 \\
SBA   & 0.85  & 0.88  & 0.99  & 0.98  & 0.97  & 0.99  & 0.71  & 0.78  & 0.99  & 0.82  & 0.86  & 0.99  & 0.94  & 0.98  & 1.00 \\
Telemarketing & 0.65  & 0.60  & 0.68  & 0.65  & 0.68  & 0.68  & 0.64  & 0.37  & 0.67  & 0.65  & 0.48  & 0.68  & 0.71  & 0.65  & 0.74 \\
SBA 50k1 & 0.68  & 0.44  & 0.78  & 0.72  & 0.46  & 0.82  & 0.59  & 0.78  & 0.73  & 0.65  & 0.58  & 0.77  & 0.75  & 0.44  & 0.82 \\
SBA 50k2 & 0.69  & 0.47  & 0.78  & 0.72  & 0.49  & 0.82  & 0.62  & 0.92  & 0.73  & 0.67  & 0.64  & 0.77  & 0.76  & 0.52  & 0.82 \\
SBA 50k3 & 0.68  & 0.46  & 0.78  & 0.70  & 0.47  & 0.82  & 0.62  & 0.84  & 0.72  & 0.66  & 0.61  & 0.77  & 0.75  & 0.41  & 0.82 \\
\midrule
\multicolumn{1}{c}{\multirow{2}[2]{*}{\textbf{Test}}} & \multicolumn{3}{c}{Accuracy} & \multicolumn{3}{c}{Precision score} & \multicolumn{3}{c}{recall score} & \multicolumn{3}{c}{F1 score} & \multicolumn{3}{c}{ROC AUC score} \\
      & \multicolumn{1}{c}{GMM} & \multicolumn{1}{c}{SVM} & \multicolumn{1}{c}{LR} & \multicolumn{1}{c}{GMM} & \multicolumn{1}{c}{SVM} & \multicolumn{1}{c}{LR} & \multicolumn{1}{c}{GMM} & \multicolumn{1}{c}{SVM} & \multicolumn{1}{c}{LR} & \multicolumn{1}{c}{GMM} & \multicolumn{1}{c}{SVM} & \multicolumn{1}{c}{LR} & \multicolumn{1}{c}{GMM} & \multicolumn{1}{c}{SVM} & \multicolumn{1}{c}{LR} \\
\midrule
Germany & 0.72  & 0.67  & 0.74  & 0.72  & 0.76  & 0.82  & 1.00  & 0.80  & 0.83  & 0.84  & 0.78  & 0.82  & 0.50  & 0.59  & 0.77 \\
Japan & 0.84  & 0.69  & 0.84  & 0.82  & 0.74  & 0.80  & 0.79  & 0.37  & 0.81  & 0.80  & 0.50  & 0.81  & 0.83  & 0.73  & 0.87 \\
Australia & 0.80  & 0.66  & 0.82  & 0.84  & 0.77  & 0.85  & 0.67  & 0.33  & 0.71  & 0.75  & 0.46  & 0.77  & 0.79  & 0.70  & 0.88 \\
Taiwan & 0.74  & 0.55  & 0.56  & 0.86  & 0.88  & 0.87  & 0.79  & 0.49  & 0.50  & 0.82  & 0.63  & 0.64  & 0.73  & 0.68  & 0.67 \\
SBA   & 0.80  & 0.85  & 0.99  & 0.99  & 0.99  & 0.99  & 0.71  & 0.79  & 1.00  & 0.83  & 0.88  & 0.99  & 0.94  & 0.99  & 1.00 \\
Telemarketing & 0.62  & 0.47  & 0.67  & 0.85  & 0.90  & 0.89  & 0.63  & 0.37  & 0.66  & 0.73  & 0.52  & 0.76  & 0.65  & 0.67  & 0.73 \\
SBA 50k1 & 0.62  & 0.66  & 0.75  & 0.91  & 0.80  & 0.95  & 0.60  & 0.78  & 0.73  & 0.72  & 0.79  & 0.83  & 0.73  & 0.44  & 0.81 \\
SBA 50k2 & 0.63  & 0.76  & 0.75  & 0.91  & 0.82  & 0.95  & 0.62  & 0.92  & 0.73  & 0.74  & 0.86  & 0.82  & 0.74  & 0.52  & 0.82 \\
SBA 50k3 & 0.64  & 0.70  & 0.74  & 0.91  & 0.81  & 0.95  & 0.62  & 0.84  & 0.72  & 0.74  & 0.82  & 0.82  & 0.73  & 0.42  & 0.82 \\
\bottomrule
\end{tabular}
}
\label{tab_accuracy2}
\end{table}

\end{document}